\begin{document}

\newtheorem{definition}{Definition}
\newtheorem{assumption}{Assumption}
\newtheorem{lemma}{Lemma}
\newtheorem{theorem}{Theorem}
\newtheorem{proposition}{Proposition}
\newtheorem{remark}{Remark}

\title{Gradient Sparsification for Efficient Wireless Federated Learning with Differential Privacy}


\author{Kang~Wei,
        Jun~Li,
        Chuan~Ma,
        Ming~Ding,
        Feng Shu,\\
        Haitao Zhao,
        Wen~Chen,
        and Hongbo Zhu}

\markboth{Submitted to SCIS}%
{Shell \MakeLowercase{\textit{et al.}}: A Sample Article Using IEEEtran.cls for IEEE Journals}
\maketitle
\begin{abstract}
Federated learning (FL) enables distributed clients to collaboratively train a machine learning model without sharing raw data with each other. However, it suffers from the leakage of private information from uploading models. In addition, as the model size grows, the training latency increases due to the limited transmission bandwidth and model performance degradation while using differential privacy (DP) protection. In this paper, we propose a gradient sparsification empowered FL framework with DP over wireless channels, in order to improve training efficiency without sacrificing convergence performance. Specifically, we first design a random sparsification algorithm to retain a fraction of the gradient elements in each client's local model, thereby mitigating the performance degradation induced by DP and reducing the number of transmission parameters over wireless channels. Then, we analyze the convergence bound of the proposed algorithm, by modeling a non-convex FL problem. Next, we formulate a time-sequential stochastic optimization problem for minimizing the developed convergence bound, under the constraints of transmit power, the average transmitting delay, as well as the client's DP requirement. Utilizing the Lyapunov drift-plus-penalty framework, we develop an analytical solution to the optimization problem. Extensive experiments have been implemented on three real-life datasets to demonstrate the effectiveness of our proposed algorithm. We show that our proposed algorithms can fully exploit the interworking between communication and computation to outperform the baselines, i.e., random scheduling, round robin and delay-minimization algorithms.
\end{abstract}

\begin{IEEEkeywords}
Federated learning, differential privacy, gradient sparsification, Lyapunov drift
\end{IEEEkeywords}

\section{Introduction}
\IEEEPARstart{F}{ederated} learning (FL) has emerged as a new distributed learning paradigm that enables multiple clients to collaboratively train a shared model without sharing their local data~\cite{Nguyen2021Enabling,Kairouz2019Advances}.
However, FL faces several critical challenges, that is, the storage, computational, and communication capabilities equipped at each client may differ due to variability in the computation frequency, memory, limited bandwidth, and power constraints. In light of this, the well-known FedAvg algorithm with local stochastic gradient descent (SGD) and partial participation of clients is widely adopted to reduce the training latency and communication overhead~\cite{Ma2019FL}.
Furthermore, several improved FL algorithms over wireless channels have been proposed to lower the inter-client variance caused by data heterogeneity and device heterogeneity~\cite{Xia2020Multi}.

Recently, in order to improve the training efficiency, some works proposed effective scheduling algorithms for wireless and computation resources between clients and the edge server in FL~\cite{Yang2020Scheduling,Wang2019Adaptive,Chen2020A,Chen2020Convergence,yang2021Energy}.
The work in~\cite{Yang2020Scheduling} characterized the performance of FL in wireless networks and developed an analytical model on FL convergence rate to evaluate the effectiveness of three different client scheduling schemes, i.e., random scheduling, round robin, and proportional fair.
A control algorithm was proposed to minimize the loss function of the convergence bound of distributed SGD by formulating FL training over a wireless network as an optimization problem~\cite{Wang2019Adaptive}.
Further, the works in~\cite{Chen2020A} and~\cite{Chen2020Convergence} improved the training efficiency based on the convergence bound by optimizing the client selection and power allocation via constructing the connection between the wireless resource allocation and the FL training performance using the convergence bound.
The work in~\cite{yang2021Energy} formulated the FL training and communication problem as an optimization problem to minimize the total energy consumption of the system under a latency constraint.
To address this problem, the work in~\cite{yang2021Energy} provided an iterative algorithm and closed-form solutions at every step for time allocation, bandwidth allocation, power control, computation frequency, and learning accuracy.

Different from effective scheduling algorithms for wireless and computation resources~\cite{Deng2022Blockchain,Deng2022Low}, model/gradient compression techniques such as sparsification~\cite{Decebal2019Scalable} and quantization~\cite{Jiang2018SketchML} can be an alternative method to balance training performance and transmission delay~\cite{Chen2021Communication}.
The work in~\cite{Zheng2021Design} introduced the finite-precision quantization in uplink and downlink communications, and provided new convergence analysis of the well-known federated average (FedAvg) in the non-independent and identically distributed (non-IID) data setting and partial clients participation.
Theoretical results revealed that, with a certain quantization, transmitting the weight differential can achieve a faster convergence rate, compared with transmitting the weight.
The work in~\cite{Wang2022Quantized} analyzed the FL convergence in terms of quantization errors and the transmission outage, and then concluded that its performance can be improved if the clients have uniform outage probabilities.
The work in~\cite{Liu2022Joint} involved model pruning for wireless FL and formulated an optimization problem to maximize the convergence rate under the given learning latency constraint.
These works can reduce the transmission overhead efficiently by compressing the uploading models, but not consider the possible privacy leakage caused by exchanged learning models.
Thus, it is desirable to utilize some privacy computing techniques to protect the shared learning models
with the rigorous privacy guarantee~\cite{Wang2019Beyond}, such as differential privacy (DP)~\cite{Dwork2014The}.

Differentially private FL provides a rigorous privacy guarantee for clients' train data, in which there exists an inherent tradeoff between model performance and data privacy~\cite{Wei2020Fed}.
This is due to the implementation of the DP definition, which requires bounding the influence of each example on the gradient from SGD training or local models, and then perturbing it by random noises with a certain scale~\cite{Wei2021User}.
Especially for larger networks, they suffer from far greater distortions during training compared to their non-private counterparts, which results in significant penalties to utility~\cite{Abadi2016Deep}.
To alleviate such degradation, one approach for the client is to leverage the gradient sparsification technique, which can keep the stochastic gradients stay in a low dimensional subspace during training~\cite{Zhou2021Bypassing}.
However, existing works for DP based FL systems have not jointly considered the advantage of the gradient sparsification from both communication efficiency and DP training aspects, and designed an efficient framework to improve the training performance and communication efficiency~\cite{Yang2021Privacy,Wei2022Low}.

To fill this gap, we propose a novel differentially-private FL scheme in wireless networks, termed differentially private FL with gradient sparsification (DP-SparFL), to provide a low communication overhead while maintaining a high model accuracy under the required privacy guarantee.
The main contributions of this paper can be summarized as follows:
\begin{itemize}
\item[$\bullet$] We propose a gradient sparsification empowered communication-efficient FL system with the DP guarantee, which will randomly reduce the elements in the gradients of each local training.
    We also improve the DP based FL performance using an adaptive gradient clipping technique with configurable gradient sparsification rates.
    Our proposed algorithm can efficiently reduce the detrimental influence of DP on the training and communication overhead in wireless networks.
\item[$\bullet$] To further improve the training efficiency, we analyze the convergence bound in terms of gradient sparsification rates in the non-convex FL setting, which is more general than the convex problems.
    We formulate a novel stochastic optimization problem that minimizes the newly found convergence bound, while satisfying transmit power, average delay and client's DP requirement constraints.
    Using the Lyapunov drift-plus-penalty framework, we provide an analytical and feasible solution to the problem.
\item[$\bullet$] Extensive experimental results, for three classification tasks, including MNIST, Fashion-MNIST and CIFAR-$10$ datasets, have been provided to demonstrate the effectiveness of our proposed algorithm.
    Moreover, we show that the proposed algorithm outperforms the baselines, i.e., random scheduling, round robin and delay-minimization algorithms, with the DP requirement.
\end{itemize}

The rest of this paper is organized as follows.
We introduce background on the FL framework and the concept of DP in~Section~\ref{sec:Preliminaries}.
In Section~\ref{sec:Sys_model}, we illustrate the system model of wireless FL and propose the DP-SparFL algorithm.
Then, we analyze the convergence bound in terms of the gradient sparsification rate, and formulate the joint channel assignment problem as an optimization problem with feasible solutions in Section~\ref{sec:Client_scheduling}.
Experimental results are described in Section~\ref{sec:Exm_res}.
Finally, conclusions are drawn in Section~\ref{sec:Concl}.

\section{Preliminaries}\label{sec:Preliminaries}
In this section, we will present preliminaries and related background knowledge on FL, DP and the gradient sparsification technique.
\subsection{Federated Learning}
We consider a wireless FL system where an access point (AP) with a central server and $N$ available channels is located in the center of a wireless network and $U$ clients are randomly distributed within the coverage of AP.
In order to train a machine learning (ML) model, AP needs to fully utilize all clients' data during $T$ communication rounds.
Let $\mathcal{D}_{i}$ denote the local dataset held by the $i$-th client, where $i \in \mathcal{U}$ and $\mathcal{U}=\{1,2,\ldots,U\}$ is the index set of clients.
The overall training process of such a wireless FL system can be divided into four parts as follows: 
1) AP transmits the global model $\boldsymbol{w}^{t}$ and training information (e.g., channel assignment) to selected clients at the $t$-th communication round; 2) the selected clients perform local training to update the local models based their local datasets; 3) the selected clients upload their local models/ updates to AP; 4) AP aggregates all received local models/updates to generate a new global model.
The aggregation process  at the $t$-th communication round in the central server can be expressed as
\begin{equation}
\Delta\boldsymbol{w}^{t}=\sum\limits_{i\in \mathcal{S}^{t}}p_{i}^{t}\Delta\boldsymbol{w}_{i}^{t},
\end{equation}
where $\mathcal{S}^{t}$ is the set of selected clients ($\mathcal{S}^{t} \subseteq \mathcal{U}$), $p_{i}^{t}=\vert \mathcal{D}_{i}\vert/\sum_{i\in \mathcal{S}^{t}}\vert \mathcal{D}_{i}\vert$, $\vert \mathcal{D}_{i}\vert$ is the size of $\mathcal{D}_{i}$, $\Delta\boldsymbol{w}^{t}$ is the global update at the $t$-th communication round, $\Delta\boldsymbol{w}_{i}^{t}$ is the local update of the $i$-th client, given by $\Delta\boldsymbol{w}_{i}^{t}=\boldsymbol{w}_{i}^{t-1,\tau} - \boldsymbol{w}^{t-1}$,
$\boldsymbol{w}_{i}^{t-1,\tau}$ is the local model after $\tau$ local training epochs and $\boldsymbol{w}^{t-1}$ is the global model at the $(t-1)$-th communication round.
Overall, we can formulate the FL task as
\begin{equation}
\boldsymbol{w}^{\star}=\mathop{\arg\min}_{\boldsymbol{w} }{F(\boldsymbol{w})},
\end{equation}
where $F(\boldsymbol{w})=\sum_{i\in\mathcal{U}}p_{i}F_{i}(\boldsymbol{w})$
represents the global loss function, $F_{i}(\cdot)$ is the local loss function of the $i$-th client and $p_{i} =\vert \mathcal{D}_{i}\vert/\sum_{i\in \mathcal{U}}\vert \mathcal{D}_{i}\vert$.
We can observe that all clients have the same data structure and learn a global ML model collaboratively in the training procedure.
\subsection{Differential Privacy}
DP mechanism has drawn a lot of attention because it can provide a strong criterion for the privacy preservation of distributed learning systems with parameters $\epsilon$ and $\delta$.
Research in differentially private ML models tracks a relaxed variant of DP, known as R{\'{e}}nyi DP (RDP)~\cite{Mironov2019R} that has already been widely adopted, such as Opacus in Facebook and tensorflow privacy in Google~\cite{Yousefpour2021Opacus}.
We first define the neighborhood dataset $\mathcal{D}'$ as adding or removing one record in the dataset $\mathcal{D}$.
Thus, we will adopt the RDP technique for the privacy budget computation, and then define RDP as follows.
\begin{definition}($(\alpha, \epsilon)$-RDP):
Given a real number $\alpha \in (1, +\infty)$ and privacy level (PL) $\epsilon$, a randomized mechanism $\mathcal M$ satisfies $(\alpha, \epsilon)$-RDP for any two adjacent datasets $\mathcal{D}, \mathcal{D}'\in \mathcal{X}$, the R{\'{e}}nyi distance between $\mathcal M(\mathcal{D})$ and $\mathcal M(\mathcal{D}')$ is given by
\begin{equation}
D_{\alpha}[\mathcal M(\mathcal{D})\|\mathcal M(\mathcal{D}')]:= \frac{1}{\alpha-1}\log {\mathbb{E}\left[\left(\frac{\mathcal M(\mathcal{D})}{\mathcal M(\mathcal{D}')}\right)^{\alpha}\right]} \leq \epsilon,
\end{equation}
where the expectation is taken over the output of $\mathcal M(\mathcal{D})$.
\end{definition}
We can note that RDP is a generalization of $(\epsilon, \delta)$-DP that adopts R{\'{e}}nyi divergence as a distance metric between two distributions.
ML models achieve RDP guarantees by two alterations to the training process: clipping the per-sample gradient and adding Gaussian noise to training gradients, as known as DP based SGD (DP-SGD).
In our model, we assume that the server is curious-but-honest and may infer private information of clients by analyzing their local models/updates.
The $i$-th client needs to achieve the $(\epsilon_{i}, \delta)$-DP requirement with a proper Gaussian noise standard deviation (STD) $\widehat{\sigma}$ with DP-SGD training, where $\epsilon_{i}$ is the required PL by the $i$-th client.
\subsection{Gradient Sparsification}
Gradient sparsification~\cite{Stich2018Sparsified} has been widely applied to reduce the model size as well as relieving the high communication burden over wireless channels.
In the gradient sparsification method, for a given gradient sparsification rate (\emph{a
certain percentage of elements that have been retained}) $s$ and a model gradient vector $\boldsymbol{g}$, this method will generate a binary mask $\boldsymbol{m}$ with the same size with $\boldsymbol{g}$.
When using the random sparsifier, the element in $\boldsymbol{m}$ is set to $1$ with the probability $s$, and $0$ with the probability $(1-s)$, respectively.
In each training process, the gradient update rule with gradient sparsification can be expressed as
\begin{equation}
\boldsymbol{g}_{k} =
\begin{cases}
\boldsymbol{g}_{k},& \text{if}\,\,\, \boldsymbol{m}_{k} = 1;\\
 0, & \text{otherwise},
\end{cases}
\end{equation}
where $\boldsymbol{g}_{k}$ is the $k$-th element in the gradient vector $\boldsymbol{g}$, $k\in \{1, \ldots, \vert \boldsymbol{g} \vert \}$.
It can be noted that when the $k$-th element of $\boldsymbol{m}$ is zero, the element in the same position of gradient does not need to update.
Because $\boldsymbol{m}$ is a binary vector, the total number of bits to transmit the sparse local update to the server can be expressed as $32s\vert \boldsymbol{g}\vert + \vert \boldsymbol{g}\vert$, where $\vert \boldsymbol{g}\vert$ is the size of $\boldsymbol{g}$ and a $32$-bit representation is adopted here.
Hence, when $s$ is small, using the sparsification method for each client in FL can efficiently reduce the communication overhead.
\section{System Model}\label{sec:Sys_model}
In this subsection, we first present the system model and problem formulation of the client resource allocation supporting FL over a frequency division wireless environment, as shown in Fig.~\ref{Fig:System_model}.
We assume downlink communication for AP will adopt the broadcast module, i.e., the same channel for all clients.
Due to the power constraint for each client device and limited number of allocated channels to each scheduled device,
we consider that not all clients will upload their updates at the aggregation step.
\begin{figure}[ht]
\centering
\includegraphics[width=3.8in,angle=0]{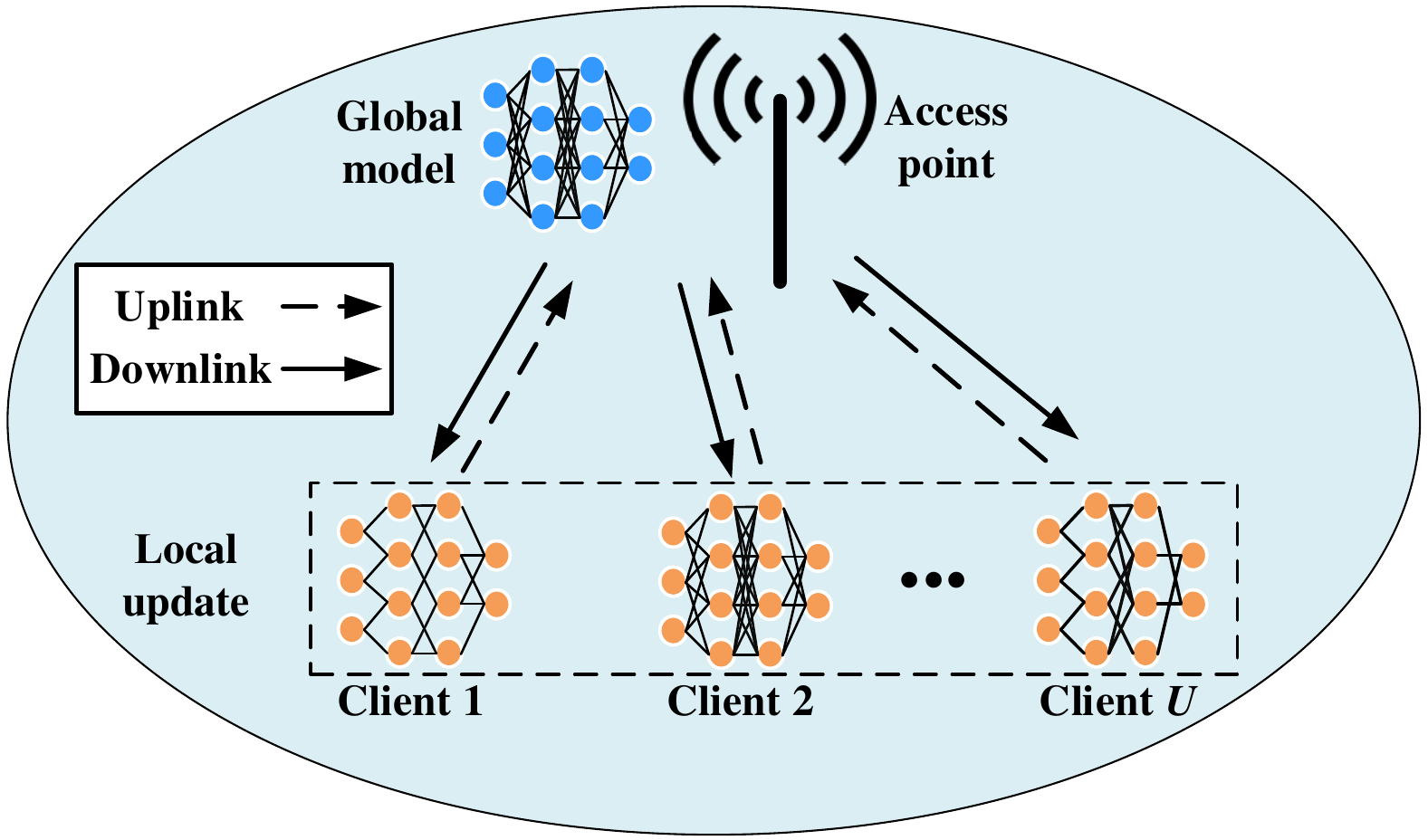}
\caption{The framework of wireless FL consists of multiple clients and an AP with multiple channels, where each client will train a learning model and upload the local update at each communication round.}
\label{Fig:System_model}
\end{figure}
\subsection{Communication Model}
At the beginning of each communication round, the AP broadcasts the global model to selected clients. Hence, the downlink data rate and the transmission latency at the $t$-th communication round can be calculated as
$C_{i}^{t, \text{do}} = B\log_{2}\left(1+\frac{P^{t} h_{i}^{t, \text{do}}}{I^{t, \text{do}}_{i}+\sigma^{2}}\right)$ and $d^{t, \text{do}}_{i}=\frac{Z^{t}}{C_{i}^{t, \text{do}}}$, respectively, 
where $P^{t}$ is the transmit power of the server, $h_{i}^{t, \text{do}}$ is the average channel gain for the downlink channel, $\sigma^{2}$ is the noise power spectral density, $I^{t, \text{do}}_{i}$ is the interference caused by other wireless equipments, $Z^{t}$ is defined as the number of bits that the AP requires to transmit vector $\Delta\boldsymbol{w}^{t}$ over wireless links and $B$ denotes the bandwidth of the uplink channel. The uplink data rate of the $i$-th client transmitting its local model via the $j$-th channel ($j \in \mathcal{N}$ and $ \mathcal{N} = \{1, \ldots, N\}$) to the AP at the $t$-th communication round can be expressed as
$C_{i, j}^{t, \text{up}} = B\log_{2}\Big{(}1+\frac{P_{i}^{t} h_{i, j}^{t}}{I^{t, \text{up}}_{i,j}+\sigma^{2}}\Big{)}$,
where $P_{i}^{t}$ is the transmit power of the $i$-th client, $h_{i, j}^{t}$ is the average channel gain and $I^{t, \text{up}}_{i,j}$ is the co-channel interference caused by the wireless equipments that are located in other service areas.
The transmission delay and energy consumption between the $i$-th client and the $j$-th channel over the uplink channel at the $t$-th communication round can be computed as
$d^{t, \text{up}}_{i, j}=\frac{Z_{i}^{t}}{C_{i, j}^{t, \text{up}}}$ and $E^{t, \text{co}}_{i, j}= P_{i}^{t} d^{t, \text{up}}_{i, j}$, respectively, 
where $Z_{i}^{t}$ is defined as the number of bits to transmit the vector $\Delta\boldsymbol{w}_{i}^{t}$ over wireless links.
\subsection{Computation Model}
In the computation model, each client is equipped with a CPU for the training task, in which the CPU frequency  $f_{i}^{t}$ (in CPU cycle/s) of the $i$-th client is changed at differet communication rounds.
Moreover, the number of CPU cycles performing the forward-backward propagation process for one data of the $i$-th client can be denoted as $\Phi_{i}$.
Due to the fact that CPU operates in the serial mode, the latency of local training can be expressed as
$d^{t, \text{lo}}_{i} = \tau \vert \mathcal D_{i}\vert \Phi_{i}/f_{i}^{t}$,
where $\tau$ is the number of local training epochs.
Then, the CPU energy consumption of the $i$-th client for one local round of computation is expressed as
$E^{t, \text{cp}}_{i} = \chi_{i}\tau \vert \mathcal D_{i}\vert \Phi_{i}\left(f_{i}^{t}\right)^{2}/2$,
where $\chi_{i}$ represents the effective capacitance coefficient of the $i$-th client's computing chipset~\cite{Dinh2021Federated}.
\subsection{Channel Allocation}
Due to the limited channels, we need to allocate the channel resources for clients, i.e., client-channel matching, at each communication round.
We define $\boldsymbol{a}^{t}_{i,j}\in\{0,1\}, i\in \mathcal{U}, j\in \mathcal{N}$ as the matching indictor.
Thus, $\boldsymbol{a}^{t}_{i,j}=1$ represents the $i$-th client will upload its update parameters via the $j$-th channel.
Moreover, the indictor must satisfy the constraint $\sum_{j\in\mathcal{N}}\boldsymbol{a}^{t}_{i,j}\leq 1$ to avoid the channel conflict.
In this way, we can rewrite the aggregation process in terms of channel allocation results as
\begin{equation}\label{equ:model_aggre}
\boldsymbol{w}^{t}=\boldsymbol{w}^{t-1}+\sum\limits_{i\in \mathcal{U}}\sum\limits_{j\in \mathcal{N}}p_{i}^{t}\boldsymbol{a}_{i,j}^{t}\Delta \boldsymbol{w}_{i}^{t}.
\end{equation}
We can observe that $\sum_{j\in\mathcal{N}}\boldsymbol{a}^{t}_{i,j} = 1$ if the $i$-th client has been selected at the $t$-th communication round, and $\sum_{j\in\mathcal{N}}\boldsymbol{a}^{t}_{i,j}=0$ otherwise.
In addition, the set of selected clients at the $t$-th communication round can be represented by $\mathcal{S}^{t}=\{i, \forall i\in\mathcal{U}|\sum_{j\in\mathcal{N}}\boldsymbol{a}^{t}_{i,j} = 1\}$.
\subsection{Differentially Private FL over Wireless Networks}
In Differentially Private FL, we utilize the DP-SGD mechanism, i.e., clipping the gradient with the threshold $C$ and adding Gaussian noise with the variance $\widehat{\sigma}^2 C^2\mathbf{I}$, where $\mathbf{I}$ is an identify matrix, to achieve the DP guarantee.
In addition, we also need to calculate the accumulative privacy budget along with training rounds.
When the accumulative privacy budget $\widehat{\epsilon}_{i}$ of the $i$-th client is up to its preset PL $\epsilon_{i}$ at the next round, this client will be removed from the clients set.
Based on the calculation method in~\cite{Yousefpour2021Opacus}, we can calculate the accumulative privacy budget for the $i$-th client after $\overline{t}$ times of model upload (exposures) as
\begin{equation}\label{eq:accumulative_eps}
\begin{aligned}
\widehat{\epsilon}_{i} = \overline{\epsilon}_{i} + \frac{\log(\frac{1}{\delta})+(\alpha-1)\log(1-\frac{1}{\alpha})-\log(\alpha)}{\alpha-1},
\end{aligned}
\end{equation}
where
$\overline{\epsilon}_{i} = \frac{\overline{t}\tau}{\alpha-1}\ln {\mathbb{E}_{z\sim \mu_{0}(z)}\left[\left(1-q_{i}+q_{i} \mu_{1}(z)/\mu_{0}(z)\right)^{\alpha}\right]}$,
$q_{i} = \vert \boldsymbol{b}\vert/\vert\mathcal{D}_{i}\vert$ is the sample rate in the DP-SGD training, $\vert \boldsymbol{b} \vert$ is the batch size, $\boldsymbol{b}$ is the sample set selected randomly from $\mathcal{D}_{i}$, $\mu_{0}(z)$ and $\mu_{1}(z)$ denote the Gaussian probability density function (PDF) of $\mathcal{N}(0,\widehat{\sigma})$ and the mixture of two Gaussian distributions $q_{i}\mathcal{N}(1,\widehat{\sigma})+(1-q_{i})\mathcal{N}(0,\widehat{\sigma})$, respectively, $\tau$ is the number of local epochs and $\alpha$ is a selectable variable.
Based on the above preparation, the detailed steps of differentially private FL over wireless networks are introduced in \textbf{Algorithm~\ref{alg:DP-FL}}.
\begin{algorithm}[h]
\label{alg:DP-FL}
\caption{Differentially Private FL over Wireless Networks}
\KwIn{Each client owns a local training dataset $\mathcal{D}_i$, an individual privacy parameter\\ $(\epsilon_{i}, \delta)$ and a preset clipping value $C$, the noise STD $\widehat{\sigma}$;}
\KwOut{Trained global model $\boldsymbol{w}^{f}$}
\textbf{Initialization:} Server initializes a global model $\boldsymbol{w}^{0}$, set $\mathcal{\widehat{U}}=\mathcal{U}$; \\
\For{$t = 0, 1, \ldots, T-1$}{
The server selects a client set $\mathcal{S}^{t}$ from $\mathcal{\widehat{U}}$ and completes the resource allocation;\\
The server broadcasts the global model $\boldsymbol{w}^{0}$ to clients in $\mathcal{S}^{t}$; \\
\For{client $i$ in $\mathcal{S}^{t}$ (in parallel)}
{
Update the local model: $\boldsymbol{w}^{t,0}_{i}=\boldsymbol{w}^{t}$; \\
\For{each local epoch $\ell$ from $0$ to $\tau-1$}{
Clip and average gradients of samples in the batch $\boldsymbol{b}$ by $\boldsymbol{g}_{i}^{t, \ell} = \frac{1}{\vert \boldsymbol{b} \vert}\sum_{m \in \boldsymbol{b}} \nabla F(\boldsymbol{w}^{t,\ell}_{i}, \mathcal{D}_{i, m})\cdot\min\{1, \frac{\Vert \nabla F(\boldsymbol{w}^{t,\ell}_{i}, \mathcal{D}_{i, m})\Vert}{C}\}$, where $\mathcal{D}_{i, m}$ is the $m$-th sample in $\mathcal{D}_{i}$ and $ \nabla F(\boldsymbol{w}^{t,\ell}_{i}, \mathcal{D}_{i, m})$ is the gradient for $\mathcal{D}_{i, m}$;\\
Perturb the average gradient as $\boldsymbol{g}_{i}^{t, \ell} = \boldsymbol{g}_{i}^{t, \ell} + \mathcal{N}\left(0, \widehat{\sigma}^{2}C^{2}\mathbf{I}\right)/\vert\boldsymbol{b}\vert$;\\
Update the local model using $\boldsymbol{w}^{t,\ell+1}_{i}=\boldsymbol{w}^{t,\ell}_{i}-\eta\boldsymbol{g}_{i}^{t, \ell}$;\\
}
Calculate the model update by $\Delta\boldsymbol{w}_{i}^{t}=\boldsymbol{w}_{i}^{t,\tau} - \boldsymbol{w}^{t-1}$;\\
Transmit the local model update $\Delta \boldsymbol{w}^{t}_{i}$ to the server via the allocated channel;\\
Calculate the accumulative privacy budget using~\eqref{eq:accumulative_eps};\\
Send the quit notification to the server if it will exceed the PL $\epsilon_{i}$ in the next communication round;\\
}
{
The server aggregates all received local model differences by~\eqref{equ:model_aggre} and set $\boldsymbol{w}^{f}=\boldsymbol{w}^{t+1}$;\\
The server removes the client with the quit notification from the client set $\mathcal{\widehat{U}}$;\\
}
}
\end{algorithm}
\section{Gradient Sparsification empowered Differentially Private FL}
In this section, we will first propose an adaptive DP clipping threshold adjustment method to alleviate the detrimental influence of the training performance caused by DP noise.
Then, we will describe our proposed gradient sparsification empowered differentially private FL algorithm.
\subsection{Adaptive DP Clipping Threshold with Gradient Sparsification}

Before introducing our proposed DP-SparFL algorithm, we need to solve two main challenges as follows: 1) The selection of the sparsification positions during training; 2) The selection of clipping threshold $C$.
For the first challenge, we will use a random sparsifier to reduce the gradient size during training, which benefits DP because it will not consume any privacy budget.
As mentioned above, the $L_2$ norm of each gradient in the local training needs to be clipped by a preset clipping threshold $C$ before adding the random noise. We find that the gradient sparsification process will reduce the gradient norm, and then alleviate the detrimental influence caused by DP noise.
Thus, in the following lemma, we provide an adaptive gradient clipping technique with configurable gradient sparsification rates to address the second challenge.

\begin{lemma}
\label{lemma:clipping_value}
With a given gradient sparsification rate $s_{i}^{t}$ for the $i$-th client, we can adopt the clipping threshold $\sqrt{s_{i}^{t}}C$ to replace the original clipping threshold $C$.
\end{lemma}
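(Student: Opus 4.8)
The plan is to show that applying the random sparsifier to a clipped per-sample gradient shrinks its expected squared $L_2$ norm by exactly the factor $s_i^t$, so that $\sqrt{s_i^t}\,C$ takes over the role of $C$ once sparsification is in effect. First I would fix a per-sample gradient $\boldsymbol{g}$ together with the binary mask $\boldsymbol{m}$ generated by the random sparsifier, whose entries $\boldsymbol{m}_k$ are independent and each equal to $1$ with probability $s_i^t$ and $0$ otherwise, and whose action on the gradient is the element-wise product $\boldsymbol{g}\odot\boldsymbol{m}$. The key computation is the expectation of the squared norm over the randomness of the mask, using $\boldsymbol{m}_k^2=\boldsymbol{m}_k$ together with the independence across coordinates, namely $\mathbb{E}_{\boldsymbol{m}}[\|\boldsymbol{g}\odot\boldsymbol{m}\|^2]=\sum_k \boldsymbol{g}_k^2\,\mathbb{E}[\boldsymbol{m}_k]=s_i^t\|\boldsymbol{g}\|^2$. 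Taking square roots yields a root-mean-square norm of $\sqrt{s_i^t}\,\|\boldsymbol{g}\|$ for the sparsified gradient.

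Next I would translate this norm contraction into a statement about the clipping threshold. In plain DP-SGD the threshold $C$ is the worst-case bound imposed on $\|\boldsymbol{g}\|$, and the Gaussian perturbation is calibrated to this sensitivity through the product $\widehat{\sigma}C$. Since after sparsification the retained gradient has root-mean-square norm at most $\sqrt{s_i^t}\,\|\boldsymbol{g}\|\le \sqrt{s_i^t}\,C$, the sensitivity of the masked gradient is scaled by the same factor $\sqrt{s_i^t}$. Replacing $C$ with $\sqrt{s_i^t}\,C$ therefore preserves the ratio between the clipping level and the genuine magnitude of the sparsified gradient, so the relative amount of gradient information retained after clipping is unchanged, while the absolute noise $\widehat{\sigma}\sqrt{s_i^t}\,C$ required for the DP guarantee is strictly smaller than $\widehat{\sigma}C$, which is exactly the promised alleviation of the DP distortion.

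The step I expect to be the main obstacle is reconciling this expectation-based norm reduction with the fact that DP sensitivity is normally a deterministic, worst-case quantity. I would close the gap by exploiting that the mask $\boldsymbol{m}$ is drawn independently of the data, so, as already observed, random sparsification consumes no privacy budget: clipping the \emph{sparsified} per-sample gradient explicitly to $\sqrt{s_i^t}\,C$ produces a deterministic per-sample sensitivity of $\sqrt{s_i^t}\,C$, precisely the quantity the RDP accounting in~\eqref{eq:accumulative_eps} is built on, and the expectation computation above then certifies that this smaller threshold only rarely activates the clip and hence adds negligible bias relative to the original scheme. A secondary point to verify is that the batch averaging and the subsequent injection of $\mathcal{N}(0,\widehat{\sigma}^2 C^2\mathbf{I})/|\boldsymbol{b}|$ interact with the mask as intended, which holds because the same mask is applied coordinate-wise to every sample of the batch, so masking, averaging, and noise addition act on a common support.
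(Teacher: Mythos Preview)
Your proposal is correct and follows essentially the same approach as the paper: both arguments reduce to the identity $\mathbb{E}_{\boldsymbol{m}}[\|\boldsymbol{g}\odot\boldsymbol{m}\|^2]=s_i^t\|\boldsymbol{g}\|^2$ and then pass to the bound $\mathbb{E}[\|\boldsymbol{g}\odot\boldsymbol{m}\|]\le\sqrt{s_i^t}\,\|\boldsymbol{g}\|$. Your coordinate-wise computation via $\boldsymbol{m}_k^2=\boldsymbol{m}_k$ and linearity is in fact cleaner than the paper's binomial sum over the number of ones in the mask, and your additional discussion of how this interacts with DP sensitivity goes beyond what the paper's own proof records, but the mathematical content is the same.
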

\begin{proof}
Please see Appendix~\ref{appendix:clipping_value}.
\end{proof}
\textbf{Lemma~\ref{lemma:clipping_value}} provides a feasible method to adjust the clipping threshold instead of one-by-one searching for each client with the changeable gradient sparsification rate $s_{i}^{t}$ determined by the channel condition. 
Specifically, in order to upload local updates successfully over wireless networks, the gradient sparsification rates for clients at each communication need to be adjusted due to changeable channel conditions.
It is not practical to try various clipping thresholds to find the optimal one in this changeable scenario.
In addition, owning to the reduction of the clipping threshold, the noise variance will also be reduced, because the DP noise variance is proportional to the square of the clipping threshold.

\subsection{DP-SparFL Algorithm}
In this subsection, we propose the DP-SparFL algorithm based on \textbf{Lemma~\ref{lemma:clipping_value}} and the gradient sparsification technique to improve the local training process in~\textbf{Algorithm~\ref{alg:DP-FL}}.
The key steps of local training in the proposed DP-SparFL algorithm are described as follows:
\begin{itemize}
\item[$1)$] \textbf{Determining the sparsification positions.} Before local training, each client needs to generate a binary mask $\boldsymbol{m}_{i}^{t}$, $\forall \boldsymbol{m}_{i, k}^{t} \in \{0, 1\}, k\in \{1,\ldots,\vert\boldsymbol{m}_{i}^{t}\vert\}, \forall i \in \mathcal{S}^{t}$, with a random sparsifier for a given gradient sparsification rate $s_{i}^{t}$.
    The binary mask is utilized to prune the gradients in the local training process.
\item[$2)$] \textbf{Obtaining the sparse gradient}. In local training, SGD is adopted while avoiding calculating a fraction of gradients based on the binary mask $\boldsymbol{m}_{i}^{t}$, which can be given by
    \begin{equation}\label{eq:prune_gradient}
    \nabla F(\boldsymbol{w}^{t,\ell}_{i}, \mathcal{D}_{i, m}) = \nabla F(\boldsymbol{w}^{t,\ell}_{i}, \mathcal{D}_{i, m})\odot \boldsymbol{m}_{i}^{t}, \forall i \in \mathcal{S}^{t},
    \end{equation}
    where $\odot$ denotes the element-wise product process.
\item[$3)$] \textbf{Perturbing the gradient.} After the sparsification process, we need to clip the gradient of the $m$-th sample as
    \begin{equation}\label{eq:clip_gradient}
    \begin{aligned}
    \boldsymbol{g}_{i}^{t, \ell} &= \frac{1}{\vert \boldsymbol{b} \vert}\sum_{m \in \boldsymbol{b}} \nabla F(\boldsymbol{w}^{t,\ell}_{i}, \mathcal{D}_{i, m})\cdot\min\bigg{\{}1, \frac{\Vert \nabla F(\boldsymbol{w}^{t,\ell}_{i}, \mathcal{D}_{i, m})\Vert}{\sqrt{s_{i}^{t}}C}\bigg{\}}, \forall i \in \mathcal{S}^{t},
    \end{aligned}
    \end{equation}
    where $C$ is the preset clipping threshold for the gradient norm of each sample. Then, we need to perturb the clipped gradient to satisfy the DP guarantee, which can be given by
    \begin{equation}\label{eq:perturb_gradient}
    \boldsymbol{g}_{i}^{t, \ell} = \boldsymbol{g}_{i}^{t, \ell} + \boldsymbol{n}_{i}^{t, \ell}, \forall i \in \mathcal{S}^{t},
    \end{equation}
    where $\boldsymbol{n}_{i}^{t, \ell}$ is the DP noise vector drawn from Gaussian distribution $\mathcal{N}\left(0, \widehat{\sigma}^{2}s_{i}^{t}C^{2}\mathbf{I}\right)/\vert\boldsymbol{b}\vert$. Owing to the sparsification update, we can use a smaller clipping threshold, i.e., $\sqrt{s_{i}^{t}}C$, to replace the original one as shown in \textbf{Lemma~\ref{lemma:clipping_value}}.
\item[$4)$] \textbf{Generating the local update.} With the sparsification, the local update can be expressed as
    \begin{equation}
    \Delta \boldsymbol{w}_{i}^{t,\tau} =\boldsymbol{w}_{i}^{t,\tau}-\boldsymbol{w}_{i}^{t,0} =-\eta\sum_{\ell=0}^{\tau-1}\boldsymbol{g}_{i}^{t, \ell}\odot\boldsymbol{m}_{i}^{t}, \forall i \in \mathcal{S}^{t},
    \end{equation}
    where $\eta$ is the learning rate. Because the binary mask is unchanged via entire local training process, the local update will also be sparse.
\end{itemize}

We can observe that a large $s_{i}^{t}$ preserves more non-zero parameter updates of $\Delta \boldsymbol{w}_{i}^{t}$ and hence improves the learning performance, but it also increases the communication cost.
Although we have introduced the key steps of the proposed DP-SparFL algorithm, the resource allocation process at the server side also needs to be optimized over wireless channels.
Thus, in the following section, we will propose the communication-efficient scheduling policy design for the proposed algorithm.
\section{Communication-Efficient Scheduling Policy}\label{sec:Client_scheduling}
In this section, we will first analyze the convergence bound of DP-SparFL.
Then, we will develop a resource allocation scheme based on the convergence bound over wireless channels to improve the FL performance.
\subsection{Convergence Analysis}
Following the literature on the convergence of gradient based training algorithms, we make the following assumptions, which have been proven to be satisfied for most of ML models~\cite{Chen2020A}.
\begin{assumption}\label{ass:LossFunction}
We make assumptions on the global loss function $F(\cdot)$ defined by $F(\cdot)\triangleq \sum_{i=1}^{U}p_iF_{i}(\cdot)$, and the $i$-th local loss function $F_{i}(\cdot)$ as follows:
\begin{enumerate}
\item[\emph{1)}] $F_{i}(\boldsymbol{w})$ is $L$-Lipschitz smooth, i.e., $\Vert \nabla F_{i}(\boldsymbol{w})-\nabla F_{i}(\boldsymbol{w}')\Vert\leq L\Vert \boldsymbol{w}-\boldsymbol{w}'\Vert$, for any $\boldsymbol{w}$ and $\boldsymbol{w}'$, where $L$ is a constant determined by the practical loss function;
\item[\emph{2)}] The stochastic gradients are bounded, i.e., for any $i$ and $\boldsymbol{w}$, $\mathbbm{E}\left[\Vert\nabla F_{i}(\boldsymbol{w})\Vert^2\right]\leq G^2$;
\item[\emph{3)}] For any $i$ and $\boldsymbol{w}$, $\Vert\nabla F_{i}(\boldsymbol{w})-\nabla F(\boldsymbol{w})\Vert^2\leq \varepsilon_{i}$, where $\varepsilon_{i}$ is the divergence metric. we also define $\varepsilon \triangleq \mathbbm{E}_{i}[\varepsilon_{i}]$.
\end{enumerate}
\end{assumption}

Then, based on the above assumptions, we present the convergence results for general loss functions.
As $F(\cdot)$ may be non-convex, we study the gradient of the global model $\boldsymbol{w}^{t}$ as $t$ increases.
We give the following theorem.
\begin{theorem}
\label{theorm:convergence_result}
Let \textbf{Assumptions} 1)-3) hold, and $L$, $\eta$ and $\tau$ be as defined therein and selected properly to ensure $\eta L\tau < 1$, then the convergence bound of DP-SparFL can be given as
\begin{equation}\label{equ:conv_bound}
\begin{aligned}
\frac{1}{T}\sum_{t=0}^{T-1}\mathbbm{E}\left[\left\Vert \nabla F\left(\boldsymbol{w}^{t}\right)\right\Vert^2\right] &\leq \frac{2\left(F\left(\boldsymbol{w}^{0}\right)- F\left(\boldsymbol{w}^{T}\right)\right)}{\eta\tau T} + \underbrace{3\varepsilon}_{\text{Caused by the data divergence}}\\
&\quad+ \underbrace{\frac{3G^2}{T}\sum_{t=0}^{T-1}\sum_{i\in \mathcal{U}}p_{i}^{t}\sum_{j=1}^{N}\boldsymbol{a}_{i, j}^{t}(1-s_i^{t})}_{\text{Caused by the gradient sparsification}}+\underbrace{\eta \tau^2\Theta(1+3\eta L\tau)}_{\text{Caused by DP}},
\end{aligned}
\end{equation}
where $\Theta$ is the expectation of $L_2$ norm of the DP noise vector.
\end{theorem}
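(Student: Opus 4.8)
The plan is to proceed via the standard one-step descent argument for non-convex smooth objectives, while carefully tracking the three independent sources of randomness at each communication round: the stochastic mini-batch sampling, the random sparsification mask $\boldsymbol{m}_i^t$, and the additive DP noise $\boldsymbol{n}_i^{t,\ell}$. First I would invoke the $L$-Lipschitz smoothness of $F$ (Assumption 1) to obtain the descent inequality
\begin{equation}
F(\boldsymbol{w}^{t+1}) \leq F(\boldsymbol{w}^{t}) + \left\langle \nabla F(\boldsymbol{w}^{t}), \boldsymbol{w}^{t+1}-\boldsymbol{w}^{t}\right\rangle + \frac{L}{2}\left\Vert \boldsymbol{w}^{t+1}-\boldsymbol{w}^{t}\right\Vert^2,
\end{equation}
and substitute the aggregation rule \eqref{equ:model_aggre} together with the sparse local update, so that $\boldsymbol{w}^{t+1}-\boldsymbol{w}^{t}=-\eta\sum_{i\in\mathcal{U}}\sum_{j=1}^{N}p_i^t\boldsymbol{a}_{i,j}^t\sum_{\ell=0}^{\tau-1}\boldsymbol{g}_i^{t,\ell}\odot\boldsymbol{m}_i^t$. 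The key structural fact is that $\mathbbm{E}[\boldsymbol{m}_{i,k}^t]=s_i^t$ element-wise, so taking the expectation over the mask converts the effective update direction into $s_i^t$ times the clipped gradient, whereas the zero-mean DP noise vanishes from the first-order term and survives only in the variance.

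For the inner-product term I would decompose each local gradient $\boldsymbol{g}_i^{t,\ell}$, evaluated at the drifting iterate $\boldsymbol{w}_i^{t,\ell}$, into three pieces: the global gradient $\nabla F(\boldsymbol{w}^t)$, the data-heterogeneity gap $\nabla F_i(\boldsymbol{w}^t)-\nabla F(\boldsymbol{w}^t)$ controlled by Assumption 3, and the local-drift gap $\nabla F_i(\boldsymbol{w}_i^{t,\ell})-\nabla F_i(\boldsymbol{w}^t)$ controlled by smoothness. Applying the elementary bound $\Vert x_1+x_2+x_3\Vert^2\le 3(\Vert x_1\Vert^2+\Vert x_2\Vert^2+\Vert x_3\Vert^2)$ is precisely the source of every factor-of-three constant in \eqref{equ:conv_bound}: it yields the dominant $\tfrac{\eta\tau}{2}\Vert\nabla F(\boldsymbol{w}^t)\Vert^2$ descent contribution and the $3\varepsilon$ divergence term, while the complementary $(1-s_i^t)$ mass of the mask together with the bounded-gradient Assumption 2 produces the sparsification penalty $\tfrac{3G^2}{T}\sum_t\sum_i p_i^t\sum_j\boldsymbol{a}_{i,j}^t(1-s_i^t)$.

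For the quadratic term $\tfrac{L}{2}\Vert\boldsymbol{w}^{t+1}-\boldsymbol{w}^t\Vert^2$ I would expand the squared norm and take expectation, using independence of the DP noise across the $\tau$ inner epochs so that its cross terms vanish and only its per-epoch variance accumulates; summing over the $\tau$ local steps yields the $\eta\tau^2\Theta$ scaling, where $\Theta$ is the expected squared $L_2$ norm of the noise vector. The feedback of this quadratic term into the first-order descent through the $\tfrac{L}{2}$ factor accounts for the multiplicative $(1+3\eta L\tau)$ refinement on the DP contribution. Finally I would rearrange to isolate $\tfrac{\eta\tau}{2}\mathbbm{E}[\Vert\nabla F(\boldsymbol{w}^t)\Vert^2]$ — this is exactly where the hypothesis $\eta L\tau<1$ is needed to keep the leading coefficient positive after absorbing the smoothness cross terms — telescope $\sum_{t=0}^{T-1}(F(\boldsymbol{w}^t)-F(\boldsymbol{w}^{t+1}))=F(\boldsymbol{w}^0)-F(\boldsymbol{w}^T)$, and divide through by $\tfrac{\eta\tau T}{2}$ to recover \eqref{equ:conv_bound}.

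The hard part will be controlling the local-drift gap $\Vert\boldsymbol{w}_i^{t,\ell}-\boldsymbol{w}^t\Vert$ across the $\tau$ inner epochs while \emph{simultaneously} carrying the sparsification mask and the per-epoch DP noise through the recursion: each inner step displaces the iterate by a noisy, masked, clipped gradient, so the drift must be bounded in expectation and then fed back self-consistently. Ensuring that this accumulated drift enters only as a lower-order correction, absorbed into the $(1+3\eta L\tau)$ factor rather than degrading the leading descent term, is the delicate bookkeeping that the condition $\eta L\tau<1$ ultimately makes rigorous.
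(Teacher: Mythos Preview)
Your proposal is correct and follows essentially the same route as the paper: smoothness descent, split into inner-product and quadratic terms, a three-way add--subtract decomposition of the gradient gap (global/heterogeneity/local-drift, plus the sparsification residual) that produces all the factor-of-three constants, recursive control of the local drift $\Vert\boldsymbol{w}_i^{t,\ell}-\boldsymbol{w}^t\Vert$, accumulation of the DP-noise variance over $\tau$ inner steps, and telescoping under $\eta L\tau<1$. The only minor technical variant is that the paper first applies the polarization identity $\langle x,y\rangle=\tfrac12(\Vert x\Vert^2+\Vert y\Vert^2-\Vert x-y\Vert^2)$ to the inner-product term $E_1$ before performing the three-way decomposition, but the structure and all key estimates match yours.
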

\begin{proof}
Please see Appendix~\ref{appenidx:conver_analysis}.
\end{proof}

It can be found from the right-hand side (RHS) of~\eqref{equ:conv_bound} that the convergence of DP-SparFL is affected by various parameters, including the gradient sparsification rate $s_i^{t}$, the divergence metric $\varepsilon$ and the noise variance.
This convergence bound can achieve an $O\left(\frac{1}{\eta\tau T}\right)$ convergence rate.
Moreover, we list several important insights as follows:
\begin{itemize}
\item[$\bullet$]
Firstly, when the $i$-th client adopts a small gradient sparsification rate, the convergence performance will decrease. However, in the resource limited scenario, the gradient sparsification can reduce the transmit delay.
\item[$\bullet$]
Secondly, the divergence metric $\varepsilon$  is also an important factor for the convergence performance, which reflects the data divergence for different clients. A balanced data distribution can directly accelerate the algorithm convergence.
\item[$\bullet$]
Last but not the least, the noise variance will also influence the convergence largely. It can be found that the proposed adaptive gradient clipping technique will improve the convergence performance by reducing the clipping threshold.
\end{itemize}

Hence, in the following subsection, we aim to maximize the FL training performance via trying to guarantee a high gradient sparsification rate (a large percentage of elements that have been retained) for each client under the training delay and client fairness constraints.

\subsection{Optimal Gradient Sparsification Rate and Wireless Resource Allocation}
Let us investigate the online scenario, where the channel allocation $\boldsymbol{a}_{i, j}^{t}$, transmit power $P_{i}^{t}$, and the gradient sparsification rate $s_{i}^{t}$ of each client are optimized for each communication round.
In order to improve the learning performance, we choose to minimize the convergence bound in the right-hand side (RHS) of~\eqref{equ:conv_bound}.
In addition, because of the privacy constraint, clients will have different participation rates (i.e., client fairness~\cite{Xia2020Multi,Huang2021An}).
Thus, we calculate the participant rate based on~\eqref{eq:accumulative_eps} for the $i$-th client as
\begin{equation}
\beta_{i} = \min\left\{\frac{N\widehat{T}_{i}}{\sum_{i'=1}^{U}\widehat{T}_{i'}}, 1\right\},
\end{equation}
where
\begin{equation}
\widehat{T}_{i} = \left\lfloor\frac{(\alpha-1)\epsilon_{i} - \log(\frac{1}{\delta})-(\alpha-1)\log(1-\frac{1}{\alpha})+\log(\alpha)}{\tau\ln {\mathbb{E}_{z\sim \mu_{0}(z)}\left[\left(1-q_{i}+\frac{q_{i} \mu_{1}(z)}{\mu_{0}(z)}\right)^{\alpha}\right]}}\right\rfloor,
\end{equation}
$\widehat{T}_{i}$ is the number of communication rounds that the $i$-th client can participate in under the DP constraint $\epsilon_{i}$ using noise STD $\widehat{\sigma}$ and $\lfloor\cdot\rfloor$ denotes rounding down.
The delay of each communication round is determined by the slowest client, and given by
$d^{t}=\max_{i\in\mathcal{U}} \sum_{j\in \mathcal{N}}\boldsymbol{a}_{i,j}^{t} d_{i, j}^{t}$
and $d_{i, j}^{t} = d_{i}^{t, \text{do}} + d_{i}^{t, \text{lo}} + d_{i, j}^{t, \text{up}}$.
By maximizing the FL training performance with the training delay and client fairness constraints, we formulate the following resource allocation problem:
\begin{equation}\label{equ:optimal_obj}
\begin{aligned}
&\textbf{P1:}\quad\min_{\substack{P_{i}^{t}, s_{i}^{t},
\boldsymbol{a}_{i, j}^{t},\\ i=1,\ldots,U}}-\frac{1}{T}\sum_{t=0}^{T-1}\sum_{i\in \mathcal{U}}\sum_{j\in \mathcal{N}}\boldsymbol{a}_{i,j}^{t}p_{i}^{t}s_{i}^{t}\notag\\
\text{s.t.}\,&\textbf{C1:}~\boldsymbol{a}_{i,j}^{t} \in \{0,1\},
\qquad\quad\textbf{C2:}~\sum_{j\in \mathcal{N}}\boldsymbol{a}_{i,j}^{t}=1,\qquad\,\textbf{C3:}~\sum_{i\in \mathcal{U}}\boldsymbol{a}_{i,j}^{t}\leq1,\notag\\
&\textbf{C4:}~s^{\text{th}}\leq s_i^{t}\leq 1,\qquad\quad\,\textbf{C5:}~0<P_{i}^{t}\leq P^\text{max},\quad\,\textbf{C6:}~\sum_{j\in \mathcal{N}}\boldsymbol{a}_{i,j}^{t}E_{i, j}^{t, \text{co}}+E_{i}^{t, \text{cp}}\leq E^{\text{max}},\notag\\
&\textbf{C7:}~\frac{1}{T}\sum_{t=0}^{T}\sum_{j\in \mathcal{N}}\boldsymbol{a}_{i,j}^{t} \geq \beta_{i},\,\textbf{C8:}~\frac{1}{T}\sum_{t=0}^{T}d^{t}\leq d^{\text{Avg}},\notag
\end{aligned}
\end{equation}
where $P^{\text{max}}$ is the maximum available transmit power, $E^{\text{max}}$ is the maximum available energy, $d^{\text{Avg}}$ is the required average training latency
and $s^{\text{th}}$ is to restrict the gradient sparsification rate since the learning accuracy decreases sharply when the gradient sparsification rate is very low~\cite{Pavlo2019Importance}.
The constraints~\textbf{C1-C3} represent
the wireless channel allocation and
\textbf{C4} is to restrict the gradient sparsification rate.
\textbf{C5} and \textbf{C6} represent the transmit power constraint and the total power consuming, respectively.
\textbf{C7} and \textbf{C8} are adopted to restrict the client fairness and learning latency, respectively.
Obviously, the problem is a mixed integer optimization problem, which is non-convex in general and cannot be directly solved. Therefore, in the following, we first decompose the original problem, and then develop a low computational complexity algorithm to achieve the suboptimal solutions.

We can observe that \textbf{P1} is a stochastic optimization problem with long-term constraints \textbf{C7} and \textbf{C8}, which can be transformed into part of the objective function~\cite{Neely2010Stochastic}.
First, we leverage the Lyapunov technique to transform constraint $\textbf{C7}$ and \textbf{C8} into queue stability constraints~\cite{Kang2018Low}.
In detail, we define virtual queues $Q^{t, \text{fa}}_{i}$ and $Q^{t, \text{de}}$ with the following update equations:
$Q^{t, \text{fa}}_{i}\triangleq[Q^{t-1, \text{fa}}_{i}+\sum_{j=1}^{N}\boldsymbol{a}_{i,j}^{t}-\beta_{i}]^{+}$
and
$Q^{t, \text{de}} \triangleq [Q^{t-1, \text{de}}+d^{t}-d^{\text{Avg}}]^{+}$, respectively,
where $[x]^{+} \triangleq \max\{x, 0\}$.
Under the framework of Lyapunov optimization, we further resort to the drift-plus-penalty algorithm and obtain 
\begin{equation}\label{equ:optimal_obj}
\begin{aligned}
\textbf{P2:}\quad&\min_{\boldsymbol{P}^{t}, \boldsymbol{s}^{t},
\boldsymbol{a}^{t}}\frac{1}{T}\sum_{t=0}^{T-1}V^{t}(\boldsymbol{P}^{t}, \boldsymbol{s}^{t},
\boldsymbol{a}^{t})\notag\\
\text{s.t.}\quad&\textbf{C1:}~\boldsymbol{a}_{i,j}^{t} \in \{0,1\},
\quad\textbf{C2:}~\sum_{j\in \mathcal{N}}\boldsymbol{a}_{i,j}^{t}=1,\qquad\textbf{C3:}~\sum_{i\in \mathcal{U}}\boldsymbol{a}_{i,j}^{t}\leq1,\notag\\
&\textbf{C4:}~s^{\text{th}}\leq s_i^{t}\leq 1,\quad\textbf{C5:}~0<P_{i}^{t}\leq P^\text{max},
\quad\textbf{C6:}~\sum_{j\in \mathcal{N}}\boldsymbol{a}_{i,j}^{t}E_{i, j}^{t, \text{co}}+E_{i}^{t, \text{cp}}\leq E^{\text{max}},\notag\\
\end{aligned}
\end{equation}
where
\begin{equation}
\begin{aligned}
V^t(\boldsymbol{P}^{t}, \boldsymbol{s}^{t}, \boldsymbol{a}^{t}) &= \sum_{i\in \mathcal{U}}\sum_{j\in \mathcal{N}}(Q_{i}^{t, \text{fa}}-\lambda p_{i}^{t}s_{i}^{t})\boldsymbol{a}_{i,j}^{t}+Q^{t, \text{de}}\left(d^{t}-d^{\text{Avg}}\right)-Q_{i}^{t, \text{fa}}\sum_{i\in \mathcal{U}}\beta_{i},
\end{aligned}
\end{equation}
$\boldsymbol{P}^{t}=\{P_{i}^{t}|i\in\mathcal{U}\}$, $\boldsymbol{s}^{t}=\{s_{i}^{t}|i\in\mathcal{U}\}$, and $\lambda > 0$ is a tuneable parameter that controls the trade-off between minimizing the convergence bound and the training delay.
We can observe that~\textbf{P2} can be divided into $T$ independent sub-optimization problems, for each communication round.
\subsubsection{Optimal Gradient Sparsification Rate}
The optimal gradient sparsification rate of the $i$-th client can be determined by the following theorem.
\begin{theorem}\label{pro:opti_pruning}
Given the client scheduling vector $\boldsymbol{a}^{t}$ and transmit power $\boldsymbol{P}^{t}$, the solution of the optimal gradient sparsification rate vector at the $t$-th communication round can be divided into $N$ subproblems with closed-form solutions, and then solved as
\begin{equation}\label{eq:opti_pruning}
\begin{aligned}
\boldsymbol{s}^{t,\star} = \mathop{\arg\min}_{\boldsymbol{s}^{t}\in \mathcal{K}^{t}} V^t\left(\boldsymbol{P}^{t}, \boldsymbol{s}^{t}, \boldsymbol{a}^{t}\right),
\end{aligned}
\end{equation}
where $\mathcal{K}^{t}$ is the set of all available solutions for $N$ subproblems.
\end{theorem}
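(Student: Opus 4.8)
The plan is to exploit the fact that, once $\boldsymbol{a}^t$ and $\boldsymbol{P}^t$ are fixed, the only terms of $V^t(\boldsymbol{P}^t,\boldsymbol{s}^t,\boldsymbol{a}^t)$ that depend on $\boldsymbol{s}^t$ are the penalty $-\lambda\sum_{i\in\mathcal{U}}\sum_{j\in\mathcal{N}}p_i^t s_i^t \boldsymbol{a}_{i,j}^t=-\lambda\sum_{i\in\mathcal{S}^t}p_i^t s_i^t$ and the delay term $Q^{t,\text{de}}d^t$ with $d^t=\max_{i\in\mathcal{S}^t}\sum_{j}\boldsymbol{a}_{i,j}^t d_{i,j}^t$. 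First I would make explicit that each $d_{i,j}^t$ is affine and strictly increasing in $s_i^t$: only the uplink delay $d_{i,j}^{t,\text{up}}=Z_i^t/C_{i,j}^{t,\text{up}}$ varies with $s_i^t$, and the transmitted payload obeys $Z_i^t=(32 s_i^t+1)|\boldsymbol{g}|$, so $d_{i,j}^t=c_{i,0}+c_{i,1}s_i^t$ with $c_{i,1}>0$ (the constants absorbing the downlink and local-computation delays, which are independent of $s_i^t$). Consequently every constraint that touches $s_i^t$ --- the box \textbf{C4}, the energy cap \textbf{C6} through $E_{i,j}^{t,\text{co}}=P_i^t d_{i,j}^{t,\text{up}}$ (with $\boldsymbol{P}^t$ fixed), and the implicit requirement $d_{i,j}^t\le d^t$ --- collapses to a simple interval constraint on the scalar $s_i^t$.

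Next I would decouple the $\max$. Introducing the common bottleneck value $D:=d^t$ as an auxiliary variable, for any fixed $D$ the inner problem separates completely across clients, since the penalty $-\lambda p_i^t s_i^t$ is decreasing in $s_i^t$ while the only coupling (the delay) has been replaced by the per-client constraint $c_{i,0}+c_{i,1}s_i^t\le D$. Each $s_i^t$ is therefore pushed to the largest value the constraints allow, giving the closed form
\begin{equation}
s_i^{t,\star}(D)=\min\Big\{1,\ s_i^{\text{en}},\ \tfrac{D-c_{i,0}}{c_{i,1}}\Big\},
\end{equation}
where $s_i^{\text{en}}$ is the upper bound induced by \textbf{C6}, and feasibility requires this value to be at least $s^{\text{th}}$. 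Substituting back leaves the scalar objective $Q^{t,\text{de}}D-\lambda\sum_{i\in\mathcal{S}^t}p_i^t s_i^{t,\star}(D)$, which is continuous and piecewise linear in $D$: its breakpoints are exactly the delays at which some client saturates its upper bound, and there are at most $N$ of them because at most $N$ clients are scheduled (one per channel, by \textbf{C2}--\textbf{C3}). Equivalently, the bottleneck $d^t$ is attained at one of the at most $N$ selected clients, so I would enumerate the candidate bottleneck client $k$ on each channel, set $D=c_{k,0}+c_{k,1}s_k^t$, reduce $V^t$ to a single-variable function of $s_k^t$, and minimize it over the feasible interval in closed form. Collecting these minimizers over all $N$ channels into the candidate set $\mathcal{K}^t$ and returning $\boldsymbol{s}^{t,\star}=\arg\min_{\boldsymbol{s}^t\in\mathcal{K}^t}V^t$ then yields the claim.

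The hard part will be the bookkeeping around the non-smooth $\max$: I must argue rigorously that an optimal $\boldsymbol{s}^t$ can always be taken so that the bottleneck is realized by a single scheduled client, and that the per-client maximizers $s_i^{t,\star}(D)$ are simultaneously optimal for the non-bottleneck clients. The latter holds because raising any non-bottleneck $s_i^t$ up to its cap strictly lowers the penalty without enlarging the $\max$, as long as $d_{i,j}^t\le D$. I would also handle infeasible candidates cleanly --- those $D$ for which some scheduled client cannot meet $s_i^t\ge s^{\text{th}}$ while respecting $d_{i,j}^t\le D$, \textbf{C5}, and \textbf{C6} --- by discarding them from $\mathcal{K}^t$, and confirm that each single-variable subproblem is genuinely closed-form, which follows since the reduced objective is piecewise linear in $s_k^t$ and is therefore minimized at a breakpoint or an endpoint of its feasible interval.
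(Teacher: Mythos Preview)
Your proposal is correct and lands on the same decomposition as the paper: enumerate the at most $N$ scheduled clients as candidate bottlenecks, solve a one-dimensional problem for each, and return the best of the $N$ candidates. The technical machinery you use, however, is different and somewhat cleaner. The paper works directly with the $\max$ term: for each hypothesized bottleneck client it computes $\partial V^t/\partial s_i^t$, splits on the sign of that derivative (comparing $\lambda$ against $Zp_i^t Q^{t,\text{de}}/C_{i,j}^{t,\text{up}}$), and in the ``positive'' case argues informally that one should decrease $s_i^t$ from $1$ until the derivative changes sign as non-bottleneck caps $s_{i'}^{\max}$ (which depend on $s_i^t$) become active. You instead lift to the epigraph variable $D=d^t$, observe that the inner problem separates with the closed form $s_i^{t,\star}(D)$, and then reduce to minimizing a piecewise-linear scalar function of $D$ --- which makes the location of the minimizer (a breakpoint or interval endpoint) immediate and replaces the paper's case analysis with a single structural argument. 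Your treatment also explicitly folds the energy cap \textbf{C6} into the per-client interval via $s_i^{\text{en}}$, which the paper's appendix does not address; conversely, the paper silently takes the payload model $Z_i^t\propto s_i^t$ rather than the $(32s_i^t+1)|\boldsymbol{g}|$ form you used, but both are affine in $s_i^t$ so the argument is unaffected. One small imprecision in your sketch: the number of breakpoints of $D\mapsto\sum_i p_i^t s_i^{t,\star}(D)$ can be up to $2N$ (one for each of the caps $1$ and $s_i^{\text{en}}$ per client), not $N$; this does not affect the conclusion, since you immediately revert to the equivalent ``enumerate the bottleneck client'' formulation, which genuinely yields $N$ subproblems.
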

\begin{proof}
Please see Appendix~\ref{appendix:optimal_pruning}.
\end{proof}

From \textbf{Theorem~\ref{pro:opti_pruning}}, we can observe that the optimization problem for the gradient sparsification rate vector can be divided into $N$ subproblems, and each of them can be solved with the closed-form solution.
We can address all $N$ subproblems, and then select the optimal solution as the final output.
Therefore, the complexity of optimizing the optimal gradient sparsification rate is linear with $N$.

\subsubsection{Optimal Transmit Power}
To obtain the optimal transmit power, we first derive the relation between $P_{i}^{t}$ and $V^t(\boldsymbol{P}^{t}, \boldsymbol{s}^{t}, \boldsymbol{a}^{t})$ as
\begin{equation}\label{equ:longequ2}
\begin{aligned}
&\frac{\partial V^t(\boldsymbol{P}^{t}, \boldsymbol{s}^{t}, \boldsymbol{a}^{t})}{\partial P_{i}^{t}} =
\begin{cases}
\sum\limits_{j\in \mathcal{N}}\frac{ -(\ln2)\lambda Zp_{i}^{t}s_{i}^{t}h_{i, j}^{t}\boldsymbol{a}_{i,j}^{t}}{B\sigma^{2}\left(\log_{2}\left(1+\frac{P_{i}^{t} h_{i, j}^{t}}{\sigma^{2}}\right)\right)^2\left(1+\frac{P_{i}^{t} h_{i, j}^{t}}{\sigma^{2}}\right)}, & \text{if $\sum\limits_{j\in \mathcal{N}}\boldsymbol{a}_{i,j}^{t}d_{i, j}^{t} > \sum\limits_{j\in \mathcal{N}}\boldsymbol{a}_{i,j}^{t}d_{i', j}^{t}\forall i'\in \mathcal{S}^{t}/i$,} \\
0,&  \text{else.}
\end{cases}
\end{aligned}
\end{equation}
Then, we can derive the first derivative between $E_{i}^{t, \text{co}}$ and $P_{i}^{t}$ as
\begin{equation}\label{equ:longequ3}
\begin{aligned}
\frac{\partial E_{i}^{t, \text{co}}}{\partial P_{i}^{t}} &= \sum\limits_{j\in \mathcal{N}}\frac{ -(\ln2)P_{i}^{t} Zp_{i}^{t}s_{i}^{t}h_{i, j}^{t}\boldsymbol{a}_{i,j}^{t}}{B\sigma^{2}\left(\log_{2}\left(1+\frac{P_{i}^{t} h_{i, j}^{t}}{\sigma^{2}}\right)\right)^2\left(1+\frac{P_{i}^{t} h_{i, j}^{t}}{\sigma^{2}}\right)}+\frac{ Zp_{i}^{t}s_{i}^{t}\boldsymbol{a}_{i,j}^{t}}{B\log_{2}\left(1+\frac{P_{i}^{t} h_{i, j}^{t}}{\sigma^{2}}\right)}\\
&=\sum\limits_{j\in \mathcal{N}}\frac{ \left(Zp_{i}^{t}s_{i}^{t}\left(\sigma^{2}\left(1+\frac{P_{i}^{t} h_{i, j}^{t}}{\sigma^{2}}\right)\log_{2}\left(1+\frac{P_{i}^{t} h_{i, j}^{t}}{\sigma^{2}}\right)-(\ln2)P_{i}^{t} h_{i, j}^{t}\right)\right)\boldsymbol{a}_{i,j}^{t}}{B\sigma^{2}\left(\log_{2}\left(1+\frac{P_{i}^{t} h_{i, j}^{t}}{\sigma^{2}}\right)\right)^2\left(1+\frac{P_{i}^{t} h_{i, j}^{t}}{\sigma^{2}}\right)}>0.
\end{aligned}
\end{equation}
Therefore, the largest transmit power under the constraint can be applied, i.e.,
\begin{equation}\label{eq:opti_transmit_power}
\begin{aligned}
P_{i}^{t,\star} = \max\{P^{\text{max}}, P^{t,\text{th}}\},
\end{aligned}
\end{equation}
where $P^{\text{th}}$ satisfies the equation:
\begin{equation}
\sum\limits_{j\in \mathcal{N}}\frac{P_{i}^{t,\text{th}}p_{i}^{t}s_{i}^{t}Z\boldsymbol{a}_{i,j}^{t}}{B\log_{2}\left(1+\frac{P_{i}^{t,\text{th}} h_{i, j}^{t}}{\sigma^{2}}\right)} = E^{\text{max}}-E_{i}^{t, \text{cp}}.
\end{equation}
\subsubsection{Optimal Channel Allocation}
When $Q^{t, \text{de}}> 0$, based on the optimizing process above, we can simplify the optimization problem~\textbf{P2} as follows.
\begin{equation}\label{equ:sim_optimal_obj}
\begin{aligned}
\textbf{P2:}\,&\min_{\boldsymbol{a}^{t}}\sum_{i\in \mathcal{U}}\sum_{j\in \mathcal{N}}(Q_{i}^{t, \text{fa}}-\lambda p_{i}^{t}s_{i}^{t})\boldsymbol{a}_{i,j}^{t}\\
&+Q^{t, \text{de}}\max_{i\in\mathcal{N}}\Bigg{\{}\sum_{j=1}^{N}\boldsymbol{a}_{i,j}^{t}\Bigg{(}\frac{ Zp_{i}^{t}s_{i}^{t}}{B\log_{2}\Big{(}1+\frac{P_{i}^{t} h_{i,j}^{t}}{\sigma^{2}}\Big{)}}+d_i^{t, \text{do}}+\frac{\tau \vert \mathcal D_{i}\vert \Phi_{i}}{f_{i}^{t}}\Bigg{)}\Bigg{\}}\\
\text{s.t.}\,\,&\textbf{C1:}~\boldsymbol{a}_{i,j}^{t} \in \{0,1\},
\,\textbf{C2:}~\sum_{j\in \mathcal{N}}\boldsymbol{a}_{i,j}^{t}=1,\,\textbf{C3:}~\sum_{i\in \mathcal{U}}\boldsymbol{a}_{i,j}^{t}\leq1,\, \textbf{C6:}~\sum_{j\in \mathcal{N}}\boldsymbol{a}_{i,j}^{t}E_{i, j}^{t, \text{co}}+E_{i}^{t, \text{cp}}\leq E^{\text{max}}.
\end{aligned}
\end{equation}
Obviously, the objective function~\eqref{equ:sim_optimal_obj} is a mix-optimization problem, and the optimization variables are integers.
We can use $\mu$ to replace the second term in \textbf{P2}, problem~\eqref{equ:sim_optimal_obj} can be transformed as:
\begin{equation}
\begin{aligned}
\textbf{P3:}\quad&\min_{\substack{\mu, \boldsymbol{a}^{t}}}\sum_{i\in \mathcal{U}}\sum_{j\in \mathcal{N}}(Q_{i}^{t, \text{fa}}-\lambda p_{i}^{t} s_{i}^{t})\boldsymbol{a}_{i,j}^{t} + \mu\\
\text{s.t.}\quad&\textbf{C1-C3}\,,\textbf{C6},\\
&\textbf{C9:}~\mu\geq\max_{i\in\mathcal{U}}\Bigg{\{}Q^{t, \text{de}}\sum_{j=1}^{N}\boldsymbol{a}_{i,j}^{t}\Bigg{(}\frac{ Zp_{i}^{t}s_{i}^{t}}{B\log_{2}\left(1+\frac{P_{i}^{t} h_{i, j}^{t}}{\sigma^{2}}\right)}+d_{i}^{\text{do}}+\frac{\tau \vert \mathcal D_{i}\vert \Phi_{i}}{f_{i}^{t}}\Bigg{)}\Bigg{\}}.
\end{aligned}
\end{equation}
Using this form, we can iteratively solve two variables $\mu$ and $\boldsymbol{a}_{i, j}^{t}$, which are corresponding to two subproblems, i.e.,
\begin{equation}
\begin{aligned}
\textbf{P31:}\quad\min_{\substack{\mu}}&\,\mu \quad \text{s.t.} \,\, &\textbf{C9}.
\end{aligned}
\end{equation}
and
\begin{equation}\label{equ:channel_matching}
\begin{aligned}
\textbf{P32:}\quad\min_{\substack{\boldsymbol{a}^{t}}}&\sum_{i\in \mathcal{U}}\sum_{j\in \mathcal{N}}(Q_{i}^{t, \text{fa}}-\lambda p_{i}^{t} s_{i}^{t})\boldsymbol{a}_{i,j}^{t}\\
\quad\text{s.t.}\quad&\textbf{C1-C3},\textbf{C6}\,\, \text{and}\,\, \textbf{C9}.
\end{aligned}
\end{equation}
Further, the optimal solution of \textbf{P31} can be expressed as
\begin{equation}\label{equ:partial_solution}
\begin{aligned}
\mu^{\star} &= \max_{i\in\mathcal{U}}\Bigg{\{}Q^{t, \text{de}}\sum_{j\in \mathcal{N}}\boldsymbol{a}_{i,j}^{t}\Bigg{(}\frac{ Zp_{i}^{t}s_{i}^{t}}{B\log_{2}\left(1+\frac{P_{i}^{t} h_{i, j}^{t}}{\sigma^{2}}\right)}+d_{i}^{\text{do}}+\frac{\tau \vert \mathcal D_{i}\vert \Phi_{i}}{f_{i}^{t}}\Bigg{)}\Bigg{\}}.
\end{aligned}
\end{equation}
The problem \textbf{P32} can be addressed by using a bipartite matching algorithm, which aims to maximize the objective function.
In order to use a bipartite matching algorithm for solving problem \textbf{P32}, we first transform the optimization problem into a bipartite matching problem.
We construct a bipartite graph $\mathcal{A} = (\mathcal{U}\times \mathcal{N}, \mathcal{E})$ where $\mathcal{N}$ is the set of channels that can be allocated to each client, each vertex in $\mathcal{U}$ represents a client and each vertex in $\mathcal{N}$ represents an channel, and $\mathcal{E}$ is the set of edges that connect to the vertices from each set $\mathcal{U}$ and $\mathcal{N}$.
Let $\boldsymbol{a}_{i, j} \in \mathcal{E}$ be the edge connecting vertex $i$ in $\mathcal{U}$ and vertex
$j$ in $\mathcal{N}$ with $\boldsymbol{a}_{i, j} \in \{0, 1\}$, where $\boldsymbol{a}_{i, j} = 1$ indicates that the $j$-th channel is allocated to the $i$-th client, otherwise, we have $\boldsymbol{a}_{i, j} = 0$.
We aim to find a subset of edges in $\mathcal{E}$, in which no two edges share a common vertex in $\mathcal{N}$, such that each channel can only be allocated to one client.
Here, we first prune the edges that cannot satisfy the constraint \textbf{C9} and obtain a non full-connected bipartite graph.
Then, we can use the conventional Hungarian search method~\cite{Mohammad2011Online} to solve the optimal matching result $\boldsymbol{a}^{*}$.
When $Q^{t, \text{de}} \leq 0$, we can note that the optimization problem can be solved by the conventional Hungarian search method.

Overall, we can optimize the gradient sparsification rate $\boldsymbol{s}^{t}$, transmit power vector $\boldsymbol{P}^{t}$, and channel allocation matrix $\boldsymbol{a}^{t}$ for training the FL algorithm successively, until the decrement between two adjacent objective function is smaller than a preset value.
\subsection{Feasibility Analysis}
In this subsection, we show the proposed solution for \textbf{P1} can satisfy the client fairness and the required average delay, and the virtual queue system defined is mean rate stable.
We state the feasibility analysis as follows.
\begin{theorem}\label{theorem:stable}
The proposed solution for \textbf{P1} is feasible. Specifically, for any minimum selection fraction and required average delay, the virtual queue system defined is strongly stable.
\end{theorem}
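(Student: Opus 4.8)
The plan is to carry out the standard Lyapunov drift-plus-penalty feasibility argument for the virtual queues $Q_{i}^{t,\text{fa}}$ and $Q^{t,\text{de}}$. First I would introduce the quadratic Lyapunov function
\begin{equation}
L(t) \triangleq \frac{1}{2}\left(\sum_{i\in\mathcal{U}}\left(Q_{i}^{t,\text{fa}}\right)^2 + \left(Q^{t,\text{de}}\right)^2\right),
\end{equation}
and study its one-slot conditional drift $\Delta(t) \triangleq \mathbbm{E}[L(t+1)-L(t)]$. Applying the elementary inequality $([q+x]^{+})^2 \leq q^2 + x^2 + 2qx$ to each virtual-queue update, and using that the per-slot arrivals and departures are uniformly bounded — $\sum_{j\in\mathcal{N}}\boldsymbol{a}_{i,j}^{t}\in\{0,1\}$, $\beta_i\in[0,1]$, and $d^{t}$ is finite because \textbf{C5} caps the transmit power by $P^{\text{max}}$ while the computation delay $\tau|\mathcal{D}_i|\Phi_i/f_i^{t}$ is bounded — I would obtain a drift bound of the form
\begin{equation}
\Delta(t) \leq B + \sum_{i\in\mathcal{U}}Q_{i}^{t,\text{fa}}\,\mathbbm{E}\!\left[\sum_{j\in\mathcal{N}}\boldsymbol{a}_{i,j}^{t}-\beta_i\right] + Q^{t,\text{de}}\,\mathbbm{E}\!\left[d^{t}-d^{\text{Avg}}\right],
\end{equation}
where $B$ is a finite constant collecting all second-order terms.

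Next I would add the scaled penalty and recognize that the per-slot objective $V^{t}(\boldsymbol{P}^{t},\boldsymbol{s}^{t},\boldsymbol{a}^{t})$ minimized in \textbf{P2} coincides, up to additive constants, with the right-hand side of the drift-plus-penalty upper bound with weight $\lambda$. Hence the proposed solution is exactly the per-slot minimizer of this bound over all decisions satisfying \textbf{C1}--\textbf{C6}. The crucial step is then to invoke the existence of a stationary randomized policy: since \textbf{P1} is feasible, there exists a queue-independent policy meeting \textbf{C7} and \textbf{C8} with a strictly positive slack $\epsilon>0$. Because our solution minimizes the bound, substituting this slack-achieving policy gives
\begin{equation}
\Delta(t) + \lambda\,\mathbbm{E}[\text{penalty}] \leq B' - \epsilon\left(\sum_{i\in\mathcal{U}}Q_{i}^{t,\text{fa}} + Q^{t,\text{de}}\right),
\end{equation}
where $B'$ is a finite constant that also absorbs the bounded penalty term (recall $p_i^{t}s_i^{t}\sum_j\boldsymbol{a}_{i,j}^{t}$ is bounded by \textbf{C2} and \textbf{C4}).

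Finally, summing this inequality over $t=0,\ldots,T-1$, using $L(t)\geq 0$ and finite $L(0)$, and dividing by $\epsilon T$, I would obtain that the time-average expected backlog $\frac{1}{T}\sum_{t=0}^{T-1}\mathbbm{E}[\sum_{i\in\mathcal{U}}Q_{i}^{t,\text{fa}}+Q^{t,\text{de}}]$ is bounded by $\frac{B'}{\epsilon}+\frac{\mathbbm{E}[L(0)]}{\epsilon T}$, which remains finite as $T\to\infty$; this is precisely strong stability of the virtual-queue system. Strong stability implies mean-rate stability, i.e. $\mathbbm{E}[Q_{i}^{T,\text{fa}}]/T\to 0$ and $\mathbbm{E}[Q^{T,\text{de}}]/T\to 0$, and combining these limits with the telescoped virtual-queue dynamics forces the long-term averages to satisfy \textbf{C7} and \textbf{C8}, establishing feasibility of the proposed solution for \textbf{P1}. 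I expect the main obstacle to be rigorously justifying the existence of the slack-achieving stationary randomized policy (the Slater-type feasibility condition) and verifying the uniform boundedness of $d^{t}$ that the constant $B$ relies on; the subsequent telescoping and time-averaging are routine within the Lyapunov framework.
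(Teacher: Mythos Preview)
Your proposal is correct and follows essentially the same Lyapunov drift-plus-penalty route as the paper: define the quadratic Lyapunov function on the two virtual queues, bound the one-slot drift via $([q+x]^{+})^{2}\le q^{2}+x^{2}+2qx$, identify the per-slot objective $V^{t}$ as the drift-plus-penalty upper bound, and then compare against a stationary feasible policy to conclude stability. The only difference is granularity: the paper offloads the slack-policy comparison to \cite{Neely2010Stochastic} (Theorem~4.5) and \cite{Deng2020Wireless} (Lemma~1), directly asserting $\mathbb{E}[V^{t}]\le C_{1}+\zeta^{\text{opt}}<\infty$ and summing, whereas you spell out the Slater-type $\epsilon$-slack argument explicitly and carry the telescoping through to the time-average backlog bound $B'/\epsilon$. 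Both arguments are standard instantiations of the same framework.
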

\begin{IEEEproof}
See Appendix~\ref{appendix:feasibility_analysis}.
\end{IEEEproof}

According to \textbf{Theorem~\ref{theorem:stable}}, the long-term fairness and required average delay
constraints can be satisfied with the proposed solution as long as the requirement is feasible.
Specifically, this theorem also reveals the fact that the long-term client fairness constraint holds under any setting of
the tuneable parameter $\lambda$.
With a larger value of $\lambda$, the fairness and delay queues will have a slower convergence rate, indicating that the fairness and average delay could not be well guaranteed before convergence.
When the training rounds are finite, the number of rounds that need to undergo before convergence could compromise the fairness and average delay in some degree.
We will verify our analysis via experiment results in Section~\ref{sec:Exm_res}.

\section{Experiments}\label{sec:Exm_res}
\subsection{Setup}
We evaluate the effectiveness of the proposed algorithm on two different convolutional neural networks (CNNs) and three datasets as follows:
\begin{itemize}
 \item[$\bullet$] \textbf{CNN on MNIST and FashionMNIST.} MNIST dataset~\cite{lecun1998gradient} consists of handwritten digits with $10$ categories formatted as $28\times28$ size gray scale images. There are $60000$ training examples and $10000$ testing examples. FashionMNIST~\cite{Xiao2017Fashion} is a dataset of Zalando's article images formatted as 28$\times$28 size gray scale images, which consists of $60000$ training samples and $10000$ testing samples.
     The CNN model for MNIST and FashionMNIST datasets contains two $5 \times 5$ convolution layers (the first layer consists of $32$ filters, the second layer consists of $64$ filters, each layer is followed with $2 \times 2$ max pooling and ReLu activation), a fully connected layer with $512$ units followed with ReLu activation, and a final softmax output layer.
 \item[$\bullet$] \textbf{CNN on CIFAR-$10$.} The CIFAR-$10$ dataset~\cite{Krizhevsky2009Learning} contains $60000$ color images with $10$ object classes e.g., dog, deer, airplane and so on, where each class has $6,000$ images. This dataset has been pre-divided into two parts, i.e., the training dataset ($50000$ images) and test dataset ($10000$ images). 
 The CNN model used for CIFAR-$10$ contains three $3 \times 3$ convolution layers, two fully connected layers, and a final softmax output layer. The first, second and third convolution layers consist of $64$, $128$ and $256$ filters, respectively, where each layer is followed by $2 \times 2$ max pooling and ReLu activation.
 The first and second fully connected layers contain $128$ and $256$ units, respectively, in which each layer is followed by ReLu activation.
 \end{itemize}

To evaluate the performance, we compare the proposed DP-SparFL algorithm with the following baselines:
\begin{itemize}
\item[$\bullet$] \textbf{Random Scheduling~\cite{Yang2020Scheduling}:} The AP selects $N$ associated clients and assigns dedicated channels for them randomly at each communication round. Then, these selected clients perform local training and upload their local updates with the assigned channels.
\item[$\bullet$] \textbf{Round Robin~\cite{Yang2020Scheduling}:} The AP arranges all the clients into $\lceil\frac{U}{N}\rceil$ groups, where $\lceil \cdot \rceil$ denotes rounding up to an integer, and then assigns each group to access the channels consecutively.
\item[$\bullet$] \textbf{Delay-minimization~\cite{Chen2021Communication}:} The AP selects a client set, consisting of $N$ clients, from available clients with the minimizing training delay without update sparsification.
\end{itemize}

For these classification tasks, we use the cross-entropy loss function.
Unless otherwise stated, the system parameters are set as follows.
We set the number of clients to $20$, each containing $1000$ training and $500$ testing examples.
For the non-IID setting, we simulate a heterogeneous partition by simulating a Dirichlet distribution, i.e., $\textbf{Dir}(0.2)$, over all classes for each client~\cite{Yurochkin2019Bayesian}.
For the condition of imbalance data numbers, we divide $20$ client into four parts, and they have $300$, $600$, $1800$ and $2100$ samples for each client, respectively.
When training with FL, we set the learning rate $\eta$ as $0.002$ for the MNIST, FashionMNIST and CIFAR-$10$ datasets.
The number of local iterations is set to $60$.
Besides, simulations are performed in a square area of $100 \times 100$ $m^{2}$.
Both available channels and clients are uniformly distributed in this plane.
The number of available channels is set to $5$.
We set the bandwidth $B$ to $15$ KHz, the transmission power of the downlink channel to $23$ dBm, the maximum transmission power of the client to $30$dBm, Gaussian white noise power to $-107$ dBm, and path loss exponent model to $PLE$, in which $PLE[dB] = 128.1 + 37.6 \log(\chi)$ with $\chi$ representing the distance in km.
The maximum computing capability $f_{i}^{t}$ and the maximum power constraint of each client are set to $2.4$ GHz and $200$mW, respectively.
In addition, the required PL $\epsilon_{i}$ of the $i$-th client is randomly distributed between $2.0$ and $10.0$ similar to~\cite{Liu2021Projected}, and unchanged throughout the FL training, as well as $\delta = 0.001$, $\forall i \in \mathcal{U}$.
We use different noise STDs for different datasets, i.e., $\widehat{\sigma}_i = 0.6$, $\widehat{\sigma}_i = 0.5$ and $\widehat{\sigma}_i = 0.4$ for MNIST, FashionMNIST and CIFAR-$10$, respectively.
we utilize the method in~\cite{Abadi2016Deep} and choose $C$ by taking the median of the norms of the unclipped parameters over the course of training.
However, the gradient sparsification rates for clients' updates at each communication need to be adjusted to address the changeable channel conditions.
We adopt the adaptive DP clipping threshold technique in \textbf{Lemma~\ref{lemma:clipping_value}} to address this challenge.
To balance the learning performance and training delay, we adopt $\lambda = 50$.
\subsection{Impact of the Clipping Threshold}

\begin{figure}[ht]
\centering
\includegraphics[width=4.5in,angle=0]{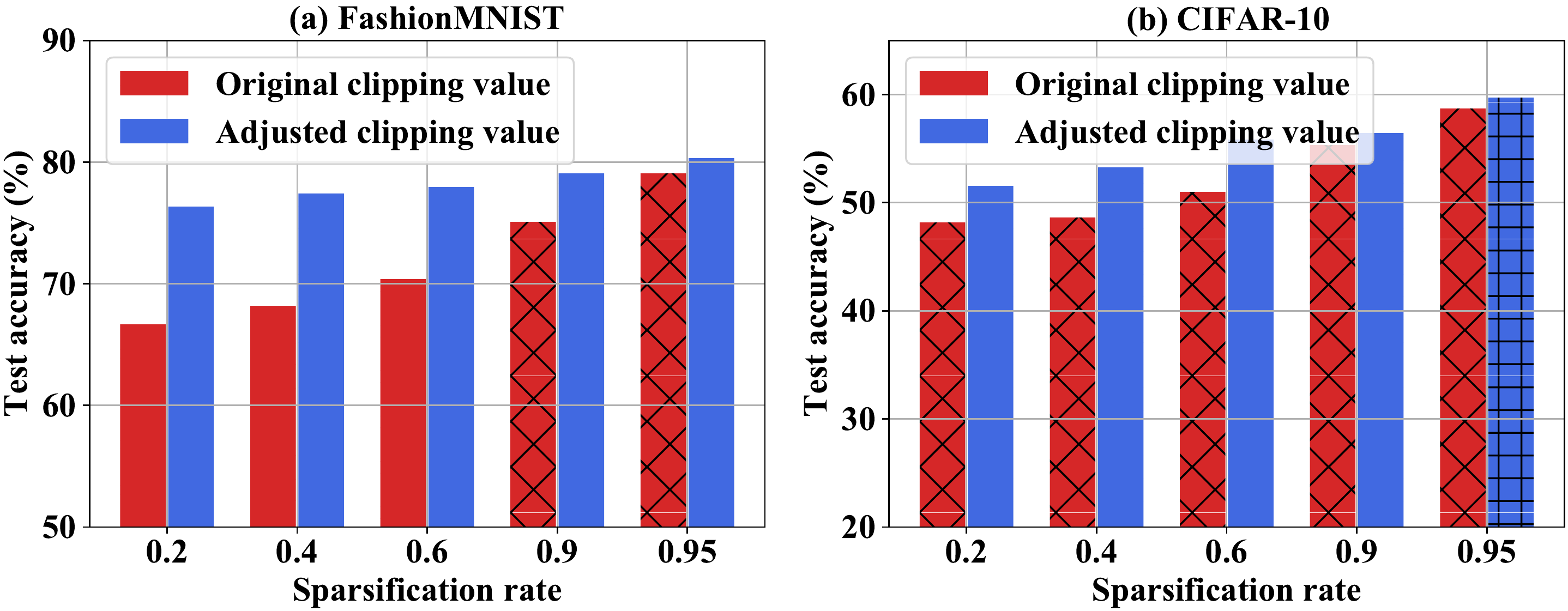}
\caption{The test accuracy of original clipping and adjusted clipping methods (\textbf{Lemma~\ref{lemma:clipping_value}}) on FashionMNIST and CIFAR-$10$ datasets with various sparsification rates (percentages of elements that have been retained).}
\label{fig:acc_adjusted_clipping}
\end{figure}

In order to evaluate the adjusting method in~\textbf{Lemma~\ref{lemma:clipping_value}}, we show the test accuracy of the proposed DP-SparFL algorithm under the unadjusted and adjusted clipping thresholds on the FashionMNIST and CIFAR-$10$ datasets in Fig.~\ref{fig:acc_adjusted_clipping}.
The original clipping threshold is the median value obtained by the pretraining process~\cite{Abadi2016Deep}.
We can note that in our system, the gradient sparsification rate for each client is uncertain and changed with the channel condition, thus it is unavailable to use the pretraining method at each communication round.
We can see that under various gradient sparsification rates, the adjusted method outperforms the unadjusted one on both FashionMNIST and CIFAR-$10$ datasets.
The intuition is that a smaller gradient sparsification rate could lead to a smaller $L_2$ norm for the training gradient for each client, thus a smaller clipping value can reduce the noise variance and improve the learning performance.
\subsection{Impact of the Tuneable Parameter $\lambda$ and Privacy Level}

Fig.~\ref{fig:acc_delay_various_V} illustrates the test accuracy and cumulative delay for the proposed DP-SparFL algorithm with various values of $\lambda$ on the FashionMNIST dataset.
From Fig.~\ref{fig:acc_delay_various_V} (a), we can see that there is an optimal $\lambda$ in view of the test accuracy.
In Fig.~\ref{fig:acc_delay_various_V} (b), we can observe that a larger $\lambda$ will have a larger training delay.
The rationale behind this is that a larger $\lambda$ can lead to a higher consideration for the update sparsification but a smaller consideration for the training delay.
The phenomenon inspires us to find a better trade-off between the training latency and update sparsification to achieve optimal training performance.

\begin{figure}[ht]
\centering
\includegraphics[width=4.5in,angle=0]{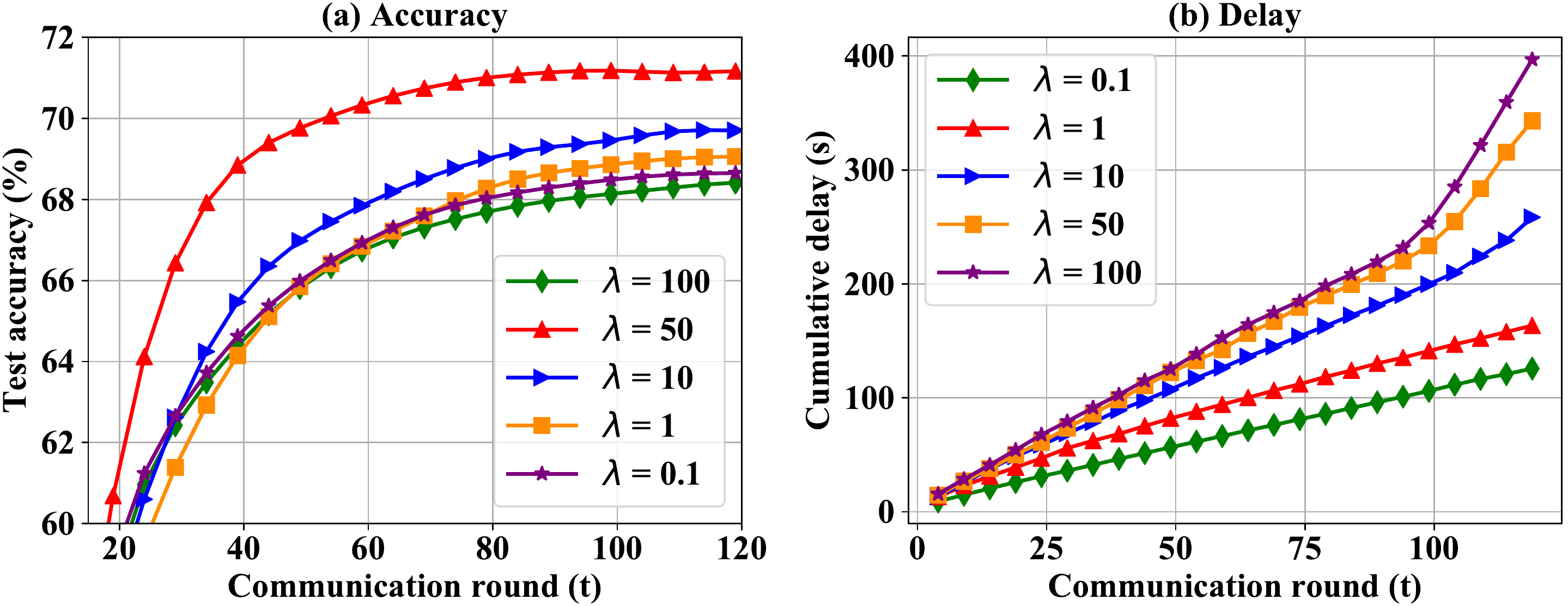}
\caption{The test accuracy and cumulative delay for the proposed DP-SparFL algorithm with various values of $\lambda$ in the IID setting on the FashionMNIST dataset.}
\label{fig:acc_delay_various_V}
\end{figure}

In Fig.~\ref{fig:acc_delay_various_eps}, we show the test accuracy and cumulative delay for the proposed DP-SparFL algorithm with various values of average privacy levels on the FashionMNIST dataset.
We can see that the test accuracy increases with the PL.
In addition, a larger PL allows clients to participate in FL training more, i.e., a larger number of communication rounds for training.
In Fig.~\ref{fig:acc_delay_various_eps} (b), it can be noted the cumulative delay decreases with the average PL.
The intuition is that clients with larger PLs can be scheduled flexibly with few privacy concerns.

\begin{figure}[ht]
\centering
\includegraphics[width=4.5in,angle=0]{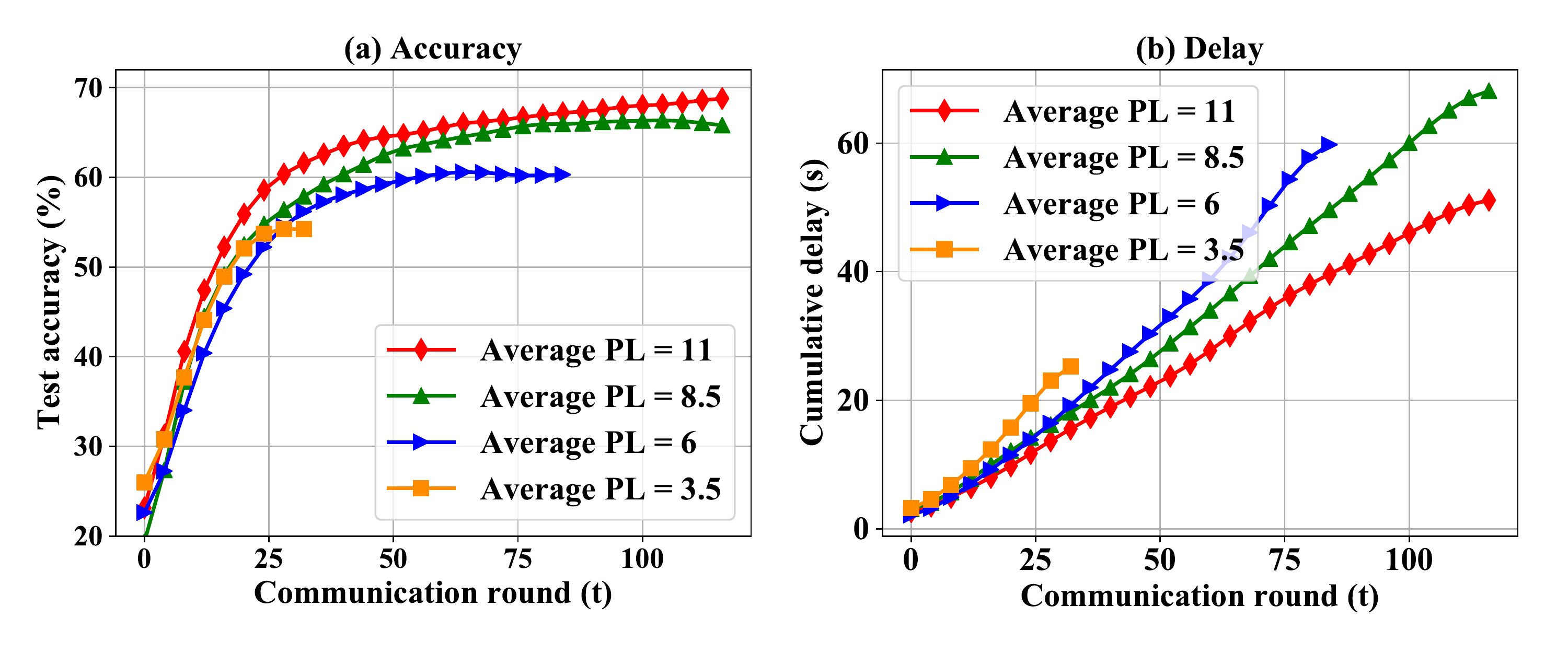}
\caption{The test accuracy and cumulative delay for the proposed DP-SparFL algorithm with various average privacy levels (PLs) on the FashionMNIST dataset. Specifically, the required DP levels of clients are randomly sampled from four intervals, i.e., $[2, 20]$, $[2, 15]$, $[2, 10]$ and $[2, 5]$.}
\label{fig:acc_delay_various_eps}
\end{figure}
\subsection{Comparison of the Various Policies}

Fig.~\ref{fig:acc_various_algorithms} illustrates the test accuracy of the proposed DP-SparFL ($\lambda = 50$), delay-minimization, round robin and random algorithms in the IID data setting on the MNIST and CIFAR-$10$ datasets.
As seen from Fig.~\ref{fig:acc_various_algorithms}(a) and (b), the proposed DP-SparFL algorithm achieves better test accuracy than baselines.
The reason is that the proposed DP-SparFL algorithm can guarantee more accessible clients owing to the sparsification technique, thereby improving the learning performance.
Moreover, the value of $\lambda$ is a key factor to balance the trade-off between the latency and the learning performance.

\begin{figure}[ht]
\centering
\includegraphics[width=4.5in,angle=0]{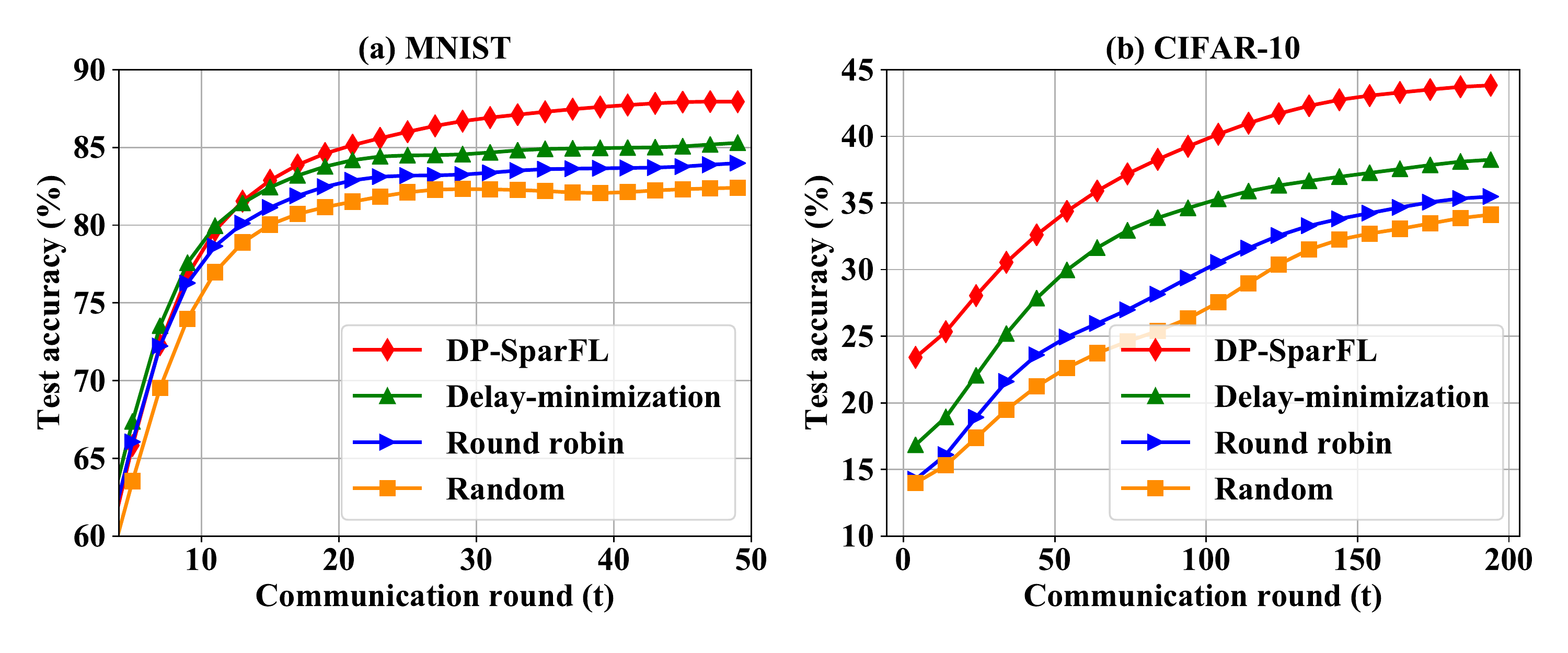}
\caption{The test accuracy for various algorithms in the IID data setting on the MNIST and CIFAR-$10$ datasets, i.e., DP-SparFL, delay-minimization, round robin and random algorithms.}
\label{fig:acc_various_algorithms}
\end{figure}

In Fig.~\ref{fig:delay_various_algorithms}, we show cumulative latencies for various algorithms in the IID data setting on the MNIST and CIFAR-$10$ datasets, i.e., DP-SparFL, delay-minimization, round robin and random algorithms.
As shown in Fig.~\ref{fig:delay_various_algorithms}, the proposed DP-SparFL algorithm can obtain the lowest latency among whole algorithms.
It can be noted that the delay of the delay-minimization algorithm is low in the early stages of the training, but it increases tremendously in the later stages.
The intuition is that the delay-minimization algorithm will select the clients with small distances with a high probability.
When the privacy budgets of these clients have run out, the cumulative latency will increase tremendously.
The proposed DP-SparFL algorithm can schedule clients with large distances with the help of sparsification, while taking account of client fairness.

\begin{figure}[ht]
\centering
\includegraphics[width=4.5in,angle=0]{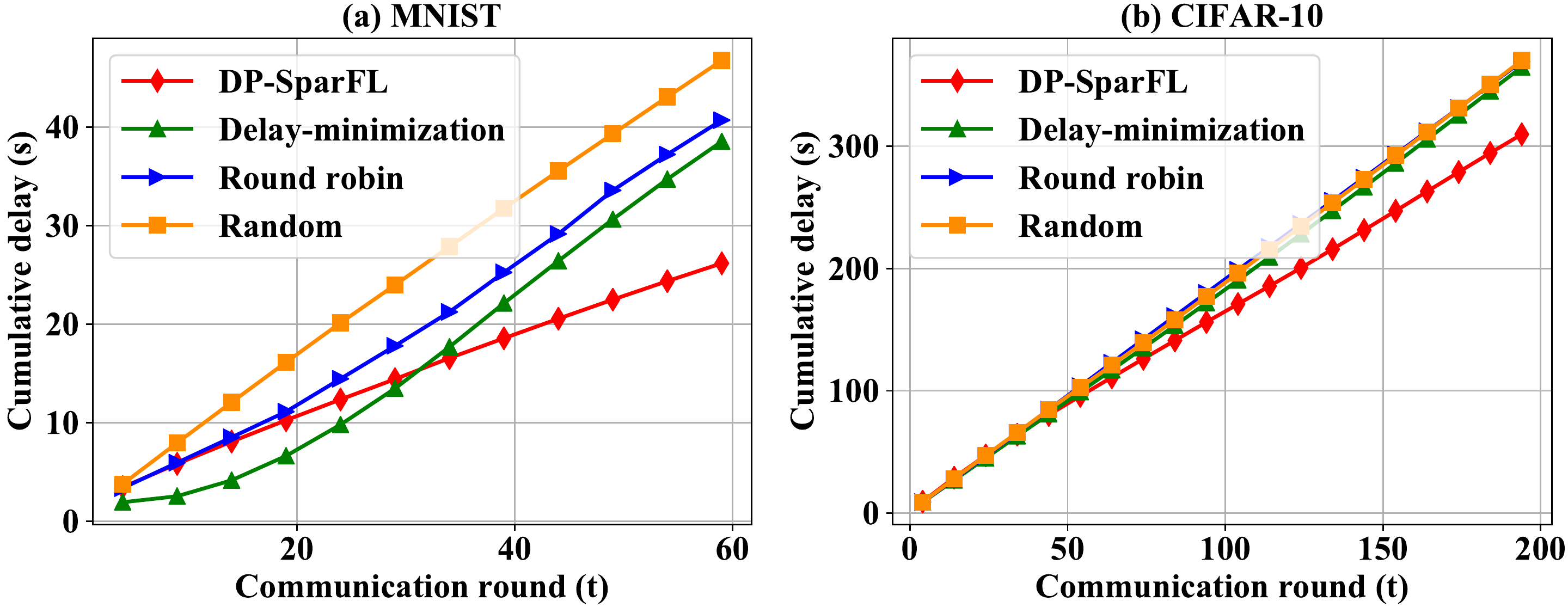}
\caption{The cumulative delay for various algorithms in the IID data setting on the MNIST and CIFAR-$10$ datasets, i.e., DP-SparFL, delay-minimization, round robin and random algorithms.}
\label{fig:delay_various_algorithms}
\end{figure}

In Figs.~\ref{fig:acc_delay_noniid_various_algorithms} and~\ref{fig:acc_delay_imbalance_various_algorithms}, we show the test accuracy and cumulative delay under two different data conditions, i.e., non-IID and imbalance sample numbers, respectively.
We can see that the proposed algorithm can obtain a higher test accuracy due to the fairness guarantee that is beneficial to non-IID data.
The reason is that in the non-IID data setting, the proposed algorithm ensures that all clients can participate in FL training with the client fairness constraint.
In addition, the proposed algorithm also takes account of the size of the dataset (affecting the sampling rate $q_{i}$ in the local training) when designing the client fairness, thus it can allocate clients that contain large datasets to high-quality channels.

\begin{figure}[ht]
\centering
\includegraphics[width=4.5in,angle=0]{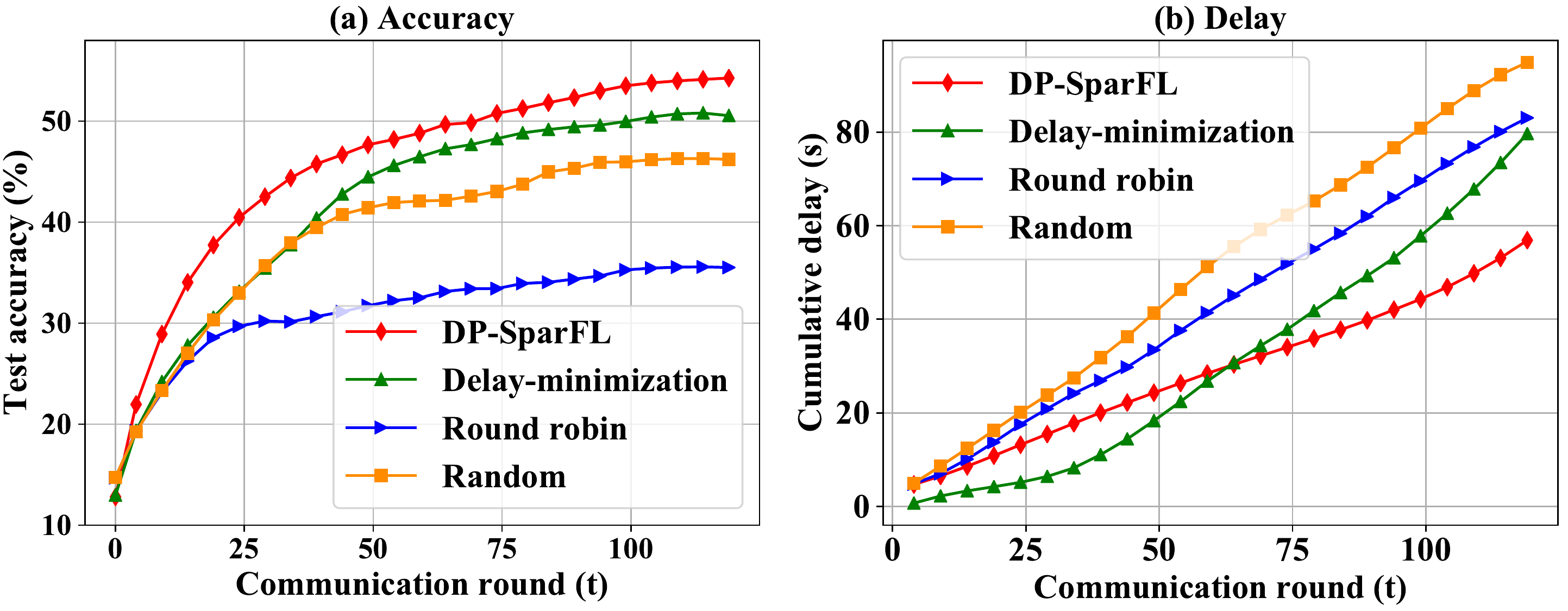}
\caption{The test accuracy and cumulative delay for various algorithms in the non-IID data setting on the FashionMNIST dataset, i.e., DP-SparFL, delay-minimization, round robin and random algorithms.}
\label{fig:acc_delay_noniid_various_algorithms}
\end{figure}
\begin{figure}[ht]
\centering
\includegraphics[width=4.5in,angle=0]{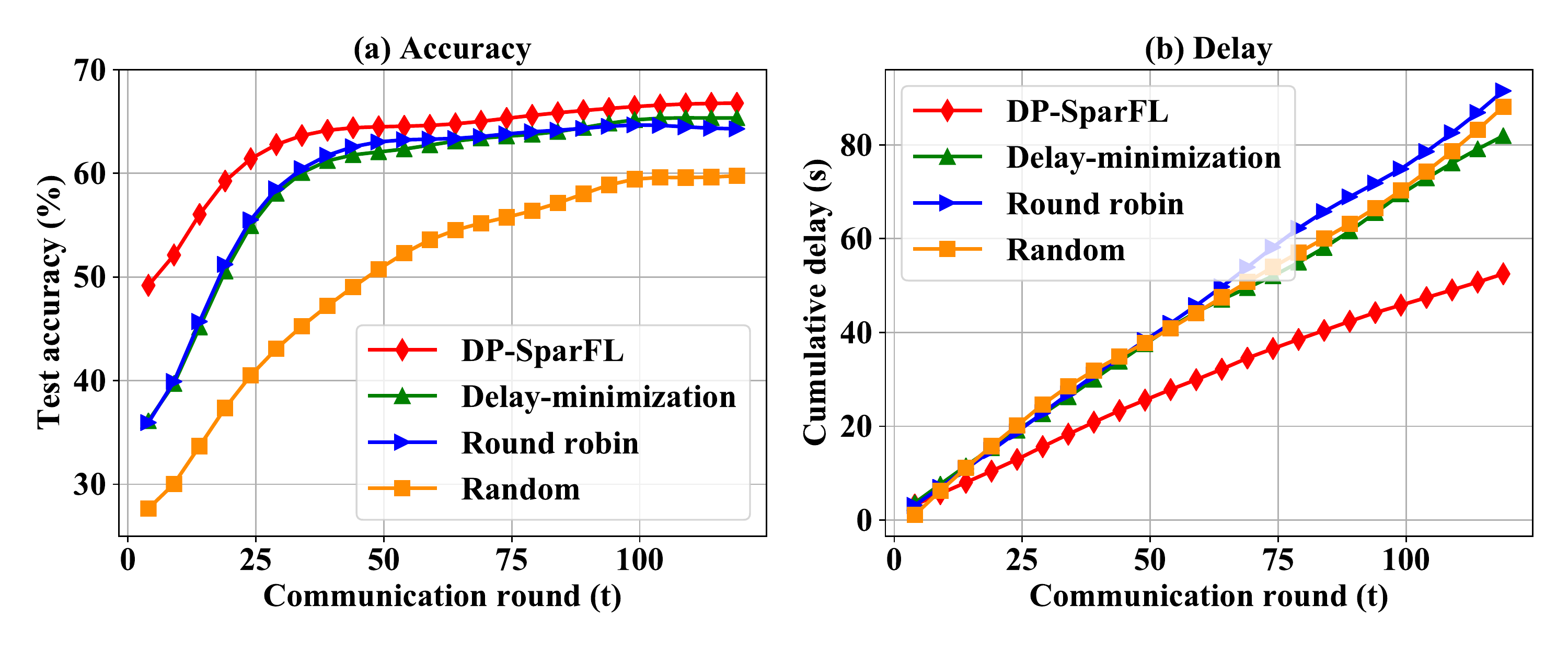}
\caption{The test accuracy and cumulative delay for various algorithms under the imbalance data condition on the FashionMNIST dataset, i.e., DP-SparFL, delay-minimization, round robin and random algorithms.}
\label{fig:acc_delay_imbalance_various_algorithms}
\end{figure}

\section{Conclusions}\label{sec:Concl}
In this paper, we have developed a novel framework for FL over a wireless network with imbalanced resources and DP requirements among clients.
We have proposed a gradient sparsification empowered differentially private FL by an adaptive gradient clipping technique that reduces the connections in gradients of local training for each client before DP noise perturbation.
Then, we have analyzed the convergence bound in terms of gradient sparsification rates by considering a non-convex FL problem.
To further improve the training performance and efficiency, a novel stochastic optimization problem has been formulated to minimize the convergence bound, while satisfying transmit power, average delay and client DP requirement constraints.
We have applied the Lyapunov drift-plus-penalty framework to address the optimization problem.
Our experimental results have exhibited the superiority of our proposed algorithms compared with baselines, i.e., random scheduling, round robin and delay-minimization algorithms.
\appendix
\subsection{Adjusting the Clipping Value}\label{appendix:clipping_value}
Considering the sparsification process, we have $\boldsymbol{g}_{i}^{t}\left(\mathcal{D}_{i,m}\right) = \boldsymbol{g}_{i}^{t}\left(\mathcal{D}_{i,m}\right)\odot\boldsymbol{m}_{i}^{t}$,
where $\mathcal{D}_{i,m}$ is the $m$-th sample of the $i$-th client, $\odot$ represents the element-wise product process, $\boldsymbol{m}_{i}^{t}$ is a binary mask vector $\forall \boldsymbol{m}_{i,k}^{t} \in \{0, 1\}$, $k\in\{1,\ldots,K\}$ and $K = \vert\boldsymbol{m}_{i}^{t} \vert $.
Because probabilities of $\boldsymbol{m}_{i,k}^{t} = 0$ and $\boldsymbol{m}_{i,k}^{t} = 1$ are equal to $s_i^t$ and $(1-s_i^t)$, respectively, we have
\begin{equation}\label{equ:a1}
\begin{aligned}
&(\mathbb{E}[\Vert\boldsymbol{g}_{i}^{t}(\mathcal{D}_{i,m})\odot\boldsymbol{m}_{i}^{t}\Vert])^2\leq\mathbb{E}[\Vert\boldsymbol{g}_{i}^{t}(\mathcal{D}_{i,m})\odot\boldsymbol{m}_{i}^{t}\Vert^2]
=\mathbb{E}\bigg{[}\sum_{k=1}^{K}(\boldsymbol{g}_{i,k}^{t}(\mathcal{D}_{i,m})\odot\boldsymbol{m}_{i,k}^{t})^2\bigg{]}\\
&\quad=\sum_{k'=0}^{K}\frac{k'\binom{K}{k'}}{K}(s_i^{t})^{k'}(1-s_i^{t})^{K-k'}\sum_{k=1}^{K}(\boldsymbol{g}_{i,k}^{t}(\mathcal{D}_{i,m}))^2\\
&\quad=\sum_{k'=0}^{K}\binom{K-1}{k'-1}(s_i^{t})^{k'}(1-s_i^{t})^{K-k'}\sum_{k=1}^{K}(\boldsymbol{g}_{i,k}^{t}(\mathcal{D}_{i,m}))^2=s_i^{t}\Vert\boldsymbol{g}_{i}^{t}(\mathcal{D}_{i,m})\Vert^2,
\end{aligned}
\end{equation}
Therefore, we can obtain
$\mathbb{E}\left[\left\Vert\boldsymbol{g}_{i}^{t}\left(\mathcal{D}_{i,m}\right)\odot\boldsymbol{m}_{i}^{t}\right\Vert\right]\leq \sqrt{s_i^{t}} \left\Vert\boldsymbol{g}_{i}^{t}\left(\mathcal{D}_{i,m}\right)\right\Vert$.
This completes the proof. $\hfill\square$
\subsection{Convergence Analysis}\label{appenidx:conver_analysis}
First, we define $\mathbbm{1}_{i}^{t} \triangleq \sum_{j=1}^{N}\boldsymbol{a}_{i,j}^{t}$ to denote whether the $i$-th client has been allocated to an available channel.
We can note that
\begin{equation}
\begin{aligned}
&\boldsymbol{w}^{t+1}-\boldsymbol{w}^{t} = \sum_{i\in \mathcal{U}}p_{i}^{t}\mathbbm{1}_{i}^{t}\boldsymbol{w}^{t,\tau}_{i}-\boldsymbol{w}^{t,0}_{i}= -\eta\sum_{i\in \mathcal{U}}p_{i}^{t}\mathbbm{1}_{i}^{t}\sum_{\ell =0}^{\tau-1}(\nabla F_{i}(\boldsymbol{w}^{t,\ell}_{i})\odot\boldsymbol{m}_{i}^{t} +\boldsymbol{n}^{t,\ell}_{i}\odot\boldsymbol{m}_{i}^{t}).
\end{aligned}
\end{equation}
Using the $L$-Lipschitz smoothness, we can obtain
\begin{equation}
\begin{aligned}
\mathbbm{E}[F(\boldsymbol{w}^{t+1})- F(\boldsymbol{w}^{t})]&\leq \mathbbm{E}[\langle\nabla F(\boldsymbol{w}^{t}), \boldsymbol{w}^{t+1}-\boldsymbol{w}^{t}\rangle]+ \frac{L}{2}\mathbbm{E}[\Vert \boldsymbol{w}^{t+1}-\boldsymbol{w}^{t} \Vert^{2}]\\
&=\underbrace{-\mathbbm{E}[\langle\nabla F(\boldsymbol{w}^{t}), \eta \sum_{i\in \mathcal{U}}p_{i}^{t}\mathbbm{1}_{i}^{t}\sum_{\ell =0}^{\tau-1}(\nabla F_{i}(\boldsymbol{w}^{t,\ell}_{i})\odot\boldsymbol{m}_{i}^{t} +\boldsymbol{n}^{t,\ell}_{i}\odot\boldsymbol{m}_{i}^{t})\rangle]}_{E_{1}}\\
&\quad+ \frac{\eta^{2}L}{2}\underbrace{\mathbbm{E}[\Vert\sum_{i\in \mathcal{U}}p_{i}^{t}\mathbbm{1}_{i}^{t}\sum_{\ell =0}^{\tau-1}(\nabla F_{i}(\boldsymbol{w}^{t,\ell}_{i})\odot\boldsymbol{m}_{i}^{t} +\boldsymbol{n}^{t,\ell}_{i}\odot\boldsymbol{m}_{i}^{t}) \Vert^{2}]}_{E_2}.
\end{aligned}
\end{equation}
Then, we can rewrite $E_1$ as
\begin{equation}
\begin{aligned}
E_1 &= -\eta\sum_{i\in \mathcal{U}}p_{i}^{t}\mathbbm{1}_{i}^{t}\sum_{\ell =0}^{\tau-1}\langle\nabla F(\boldsymbol{w}^{t}),\mathbbm{E}[\nabla F_{i}(\boldsymbol{w}^{t,\ell}_{i})\odot\boldsymbol{m}_{i}^{t}]\rangle-\eta\sum_{i\in \mathcal{U}}p_{i}^{t}\mathbbm{1}_{i}^{t}\sum_{\ell =0}^{\tau-1}\langle\nabla F(\boldsymbol{w}^{t}),\mathbbm{E}[\boldsymbol{n}^{t,\ell}_{i}\odot\boldsymbol{m}_{i}^{t}]\rangle.
\end{aligned}
\end{equation}
Because $\mathbbm{E}[\boldsymbol{n}^{t,\ell}_{i}]=0$ and $\langle \boldsymbol{x}, \boldsymbol{y}\rangle = \frac{1}{2}(\Vert \boldsymbol{y}\Vert^2+\Vert \boldsymbol{x}\Vert^2-\Vert \boldsymbol{x}-\boldsymbol{y}\Vert^2)$, we have
\begin{equation}
\begin{aligned}
E_1 & = -\frac{\eta}{2}\sum_{\ell =0}^{\tau-1}\Vert \nabla F(\boldsymbol{w}^{t})\Vert^2 - \frac{\eta}{2}\sum_{i\in \mathcal{U}}p_{i}^{t}\mathbbm{1}_{i}^{t}\sum_{\ell =0}^{\tau-1}\mathbbm{E}[\Vert \nabla F_{i}(\boldsymbol{w}^{t,\ell}_{i})\odot\boldsymbol{m}_{i}^{t}\Vert^2]\\
&\quad+\frac{\eta}{2}\sum_{i\in \mathcal{U}}p_{i}^{t}\mathbbm{1}_{i}^{t}\sum_{\ell =0}^{\tau-1}\mathbbm{E}[\Vert \nabla F(\boldsymbol{w}^{t})-\nabla F_{i}(\boldsymbol{w}^{t,\ell}_{i})\odot\boldsymbol{m}_{i}^{t}\Vert^2].
\end{aligned}
\end{equation}
Further, we have
\begin{equation}
\begin{aligned}
E_1 &= -\frac{\eta}{2}\sum_{\ell =0}^{\tau-1}\Vert \nabla F(\boldsymbol{w}^{t})\Vert^2- \frac{\eta}{2}\sum_{i\in \mathcal{U}}p_{i}^{t}\mathbbm{1}_{i}^{t}\sum_{\ell =0}^{\tau-1}\mathbbm{E}[\Vert \nabla F_{i}(\boldsymbol{w}^{t,\ell}_{i})\odot\boldsymbol{m}_{i}^{t}\Vert^2]\\
&\quad+\frac{\eta}{2}\sum_{i\in \mathcal{U}}p_{i}^{t}\mathbbm{1}_{i}^{t}\sum_{\ell =0}^{\tau-1}\mathbbm{E}[\Vert \nabla F(\boldsymbol{w}^{t})-\nabla F_i(\boldsymbol{w}^{t})+\nabla F_i(\boldsymbol{w}^{t})-\nabla F_{i}(\boldsymbol{w}^{t,\ell}_{i})\\
&\quad+\nabla F_{i}(\boldsymbol{w}^{t,\ell}_{i})-\nabla F_{i}(\boldsymbol{w}^{t,\ell}_{i})\odot\boldsymbol{m}_{i}^{t}\Vert^2].
\end{aligned}
\end{equation}
Due to Jensen's inequality and~\eqref{equ:a1} we obtain
\begin{equation}
\begin{aligned}
E_1&\leq-\frac{\eta}{2}\sum_{\ell =0}^{\tau-1}\Vert \nabla F(\boldsymbol{w}^{t})\Vert^2- \frac{\eta}{2}\sum_{i\in \mathcal{U}}p_{i}^{t}\mathbbm{1}_{i}^{t}s_i^{t}\sum_{\ell =0}^{\tau-1}\mathbbm{E}[\Vert \nabla F_{i}(\boldsymbol{w}^{t,\ell}_{i})\Vert^2]
+\frac{3\eta\tau}{2}\sum_{i\in \mathcal{U}}p_{i}^{t}\mathbbm{1}_{i}^{t}\varepsilon_{i}\\
&\quad+\frac{3\eta L^2}{2}\sum_{i\in \mathcal{U}}p_i^{t}\mathbbm{1}_{i}^{t}\sum_{\ell =0}^{\tau-1}\underbrace{\mathbbm{E}[\Vert \boldsymbol{w}^{t}-\boldsymbol{w}^{t,\ell}_{i}\Vert^2]}_{E_{11}}+\frac{3\eta}{2}\sum_{i\in \mathcal{U}}p_{i}^{t}\mathbbm{1}_{i}^{t}(1-s_{i}^{t})\sum_{\ell =0}^{\tau-1}\mathbbm{E}[\Vert \nabla F_{i}(\boldsymbol{w}^{t,\ell}_{i})\Vert^2].
\end{aligned}
\end{equation}
Then, we can bound $E_{11}$ as
\begin{equation}
\begin{aligned}
E_{11}
&= \mathbbm{E}[\Vert \boldsymbol{w}^{t}-(\boldsymbol{w}^{t,\ell-1}_{i}-\eta\nabla F_i(\boldsymbol{w}^{t,\ell-1}_{i})\odot\boldsymbol{m}_{i}^{t}-\eta\boldsymbol{n}^{t,\ell-1}_{i}\odot\boldsymbol{m}_{i}^{t})\Vert^2]\\
&=\eta^2\mathbbm{E}[\Vert \sum_{\kappa = 0}^{\ell-1} (\nabla F_i(\boldsymbol{w}^{t,\kappa}_{i})\odot\boldsymbol{m}_{i}^{t}+\boldsymbol{n}^{t,\kappa}_{i}\odot\boldsymbol{m}_{i}^{t})\Vert^2]\leq \eta^2\ell\sum_{\kappa = 0}^{\ell-1}\mathbbm{E}[\Vert  \nabla F_i(\boldsymbol{w}^{t,\kappa}_{i})\odot\boldsymbol{m}_{i}^{t}+\boldsymbol{n}^{t,\kappa}_{i}\odot\boldsymbol{m}_{i}^{t}\Vert^2]\\
&=\eta^2\ell s_i^{t}\sum_{\kappa = 0}^{\ell-1}\mathbbm{E}[\Vert  \nabla F_i(\boldsymbol{w}^{t,\kappa}_{i})\Vert^2]+\eta^2\ell  s_i^{t}\sum_{\kappa = 0}^{\ell-1}\mathbbm{E}[\Vert\boldsymbol{n}^{t,\kappa}_{i}\Vert^2].
\end{aligned}
\end{equation}
Hence, we can obtain
\begin{equation}
\begin{aligned}
E_1 &\leq-\frac{\eta}{2}\sum_{\ell =0}^{\tau-1}\Vert \nabla F(\boldsymbol{w}^{t})\Vert^2 +\frac{3\eta\tau}{2}\sum_{i\in \mathcal{U}}p_{i}^{t}\mathbbm{1}_{i}^{t}\varepsilon_{i}+\frac{\eta}{2}\sum_{i\in \mathcal{U}}p_{i}^{t}\mathbbm{1}_{i}^{t}(3-4s_i^{t})\sum_{\ell =0}^{\tau-1}\mathbbm{E}[\Vert \nabla F_{i}(\boldsymbol{w}^{t,\ell}_{i})\Vert^2]\\
&\quad+\frac{3\eta^3 L^2}{2}\sum_{i\in \mathcal{U}}p_{i}^{t}\mathbbm{1}_{i}^{t}s_i^{t}\sum_{\ell =0}^{\tau-1}\ell\sum_{\kappa = 0}^{\ell-1}\mathbbm{E}[\Vert  \nabla F_i(\boldsymbol{w}^{t,\kappa}_{i})\Vert^2]+\frac{3\eta^3 L^2}{2}\sum_{i\in \mathcal{U}}p_{i}^{t}\mathbbm{1}_{i}^{t}s_{i}^{t}\sum_{\ell =0}^{\tau-1}\ell\sum_{\kappa = 0}^{\ell-1}\mathbbm{E}[\Vert\boldsymbol{n}^{t,\kappa}_{i}\Vert^2]
\end{aligned}
\end{equation}
and
\begin{equation}
\begin{aligned}
E_{2} &\leq \tau\sum_{i\in \mathcal{U}}p_{i}^{t}\mathbbm{1}_{i}^{t}\sum_{\ell =0}^{\tau-1}\mathbbm{E}[\Vert\nabla F_{i}(\boldsymbol{w}^{t,\ell}_{i})\odot\boldsymbol{m}_{i}^{t}+\boldsymbol{n}^{t,\ell}_{i}\odot\boldsymbol{m}_{i}^{t} \Vert^{2}] \\
&= \tau\sum_{i\in \mathcal{U}}p_{i}^{t}\mathbbm{1}_{i}^{t}s_{i}^{t}\sum_{\ell =0}^{\tau-1}\mathbbm{E}[\Vert\nabla F_{i}(\boldsymbol{w}^{t,\ell}_{i})\Vert^{2}]+\tau\sum_{i\in \mathcal{U}}p_{i}^{t}\mathbbm{1}_{i}^{t}s_{i}^{t}\sum_{\ell =0}^{\tau-1}\mathbbm{E}[\Vert\boldsymbol{n}^{t,\ell}_{i}\Vert^{2}].
\end{aligned}
\end{equation}
Combining $E_1$ and $E_2$, we can obtain
\begin{equation}
\begin{aligned}
&\mathbbm{E}[F(\boldsymbol{w}^{t+1})- F(\boldsymbol{w}^{t})]
\leq
-\frac{\eta}{2}\sum_{\ell =0}^{\tau-1}\Vert \nabla F(\boldsymbol{w}^{t})\Vert^2 +\frac{\eta}{2}\sum_{i\in \mathcal{U}}p_{i}^{t}\mathbbm{1}_{i}^{t}(3-4s_i^{t})\sum_{\ell =0}^{\tau-1}\mathbbm{E}[\Vert \nabla F_{i}(\boldsymbol{w}^{t,\ell}_{i})\Vert^2]\\
&\,+\frac{3\eta^3 L^2}{2}\sum_{i\in \mathcal{U}}p_{i}^{t}\mathbbm{1}_{i}^{t}s_i^{t}\sum_{\ell =0}^{\tau-1}\ell\sum_{\kappa = 0}^{\ell-1}\mathbbm{E}[\Vert  \nabla F_i(\boldsymbol{w}^{t,\kappa}_{i})\Vert^2]+\frac{3\eta^3 L^2}{2}\sum_{i\in \mathcal{U}}p_{i}^{t}\mathbbm{1}_{i}^{t}s_{i}^{t}\sum_{\ell =0}^{\tau-1}\ell\sum_{\kappa = 0}^{\ell-1}\mathbbm{E}[\Vert\boldsymbol{n}^{t,\kappa}_{i}\Vert^2]\\
&\,+\frac{\eta^{2}L\tau}{2}\sum_{i\in \mathcal{U}}p_{i}^{t}\mathbbm{1}_{i}^{t}s_{i}^{t}\sum_{\ell =0}^{\tau-1}\mathbbm{E}[\Vert\nabla F_{i}(\boldsymbol{w}^{t,\ell}_{i}) \Vert^{2}]+\frac{\eta^{2}L\tau}{2}\sum_{i\in \mathcal{U}}p_{i}^{t}\mathbbm{1}_{i}^{t}s_{i}^{t}\sum_{\ell =0}^{\tau-1}\mathbbm{E}[\Vert\boldsymbol{n}^{t,\ell}_{i} \Vert^{2}]+\frac{3\eta\tau}{2}\sum_{i\in \mathcal{U}}p_{i}^{t}\mathbbm{1}_{i}^{t}\varepsilon_{i}.
\end{aligned}
\end{equation}
To ensure the training performance, we will select a proper DP noise variance to have $ \mathbbm{E}[\Vert\boldsymbol{n}^{t,\ell}_{i} \Vert^{2}] = \mathbbm{E}[\Vert\boldsymbol{n}^{t,\kappa}_{i} \Vert^{2}]\leq \Theta$.
Due to the bounded gradient, by setting $\eta L\tau < 1$ and $\eta^3 L^2 \ll 1$, we obtain
\begin{equation}
\begin{aligned}
\mathbbm{E}[F(\boldsymbol{w}^{t+1})- F(\boldsymbol{w}^{t})]
&\leq-\frac{\eta\tau}{2}\Vert \nabla F(\boldsymbol{w}^{t})\Vert^2+ \frac{3\eta\tau G^2}{2}\sum_{i\in \mathcal{U}}p_{i}^{t}\mathbbm{1}_{i}^{t}(1-s_i^{t})\\
&\quad +\frac{\eta^{2}L\tau^2\Theta(1+3\eta L\tau)}{2}+\frac{3\eta\tau}{2}\mathbbm{E}\bigg{[}\sum_{i\in \mathcal{U}}p_{i}^{t}\mathbbm{1}_{i}^{t}\varepsilon_{i}\bigg{]}.
\end{aligned}
\end{equation}
Rearranging and summing $t$ from $0$ to $T-1$, we have
\begin{equation}
\begin{aligned}
\frac{1}{T}\sum_{t=0}^{T-1}\mathbbm{E}[\Vert \nabla F(\boldsymbol{w}^{t})\Vert^2] &\leq \frac{2(F(\boldsymbol{w}^{0})- F(\boldsymbol{w}^{T}))}{\eta\tau T} + 3\varepsilon\\
&\quad+ \frac{3G^2}{T}\sum_{t=0}^{T-1}\sum_{i\in \mathcal{U}}p_{i}^{t}\mathbbm{1}_{i}^{t}(1-s_i^{t})+\eta \tau^2\Theta(1+3\eta L\tau).
\end{aligned}
\end{equation}
This completes the proof. $\hfill\square$

\subsection{Solution of the Optimal Gradient Sparsification Rate}\label{appendix:optimal_pruning}
To obtain the optimal gradient sparsification rate, we first derive the relation between $s_{i}^{t}$ and $V^t$.
Hence, we first consider the condition $Q^{t, \text{de}}> 0$ and have
\begin{equation}
\begin{aligned}
V^t &= \sum_{i\in \mathcal{U}}\sum_{j\in \mathcal{N}}(Q_{i}^{t, \text{fa}}-\lambda p_{i}^{t} s_{i}^{t})\boldsymbol{a}_{i,j}^{t} + Q^{t, \text{de}}\left(d^{t}-d^{\text{Avg}}\right)-\sum_{i\in \mathcal{U}}\sum_{j\in \mathcal{N}}\beta_{i}
= \sum_{i\in \mathcal{U}}\sum_{j\in \mathcal{N}}(Q_{i}^{t, \text{fa}}-\lambda p_{i}^{t} s_{i}^{t})\boldsymbol{a}_{i,j}^{t}\\
&+Q^{t, \text{de}}\max_{i\in\mathcal{N}}\Bigg{\{}\sum_{j\in \mathcal{N}}\boldsymbol{a}_{i,j}^{t}\Bigg{(}\frac{ Zp_{i}^{t}s_{i}^{t}}{B\log_{2}\left(1+\frac{P_{i}^{t} h_{i,j}^{t}}{\sigma^{2}}\right)}+d_i^{t, \text{do}}+\frac{\tau \vert \mathcal D_{i}\vert \Phi_{i}}{f_{i}^{t}}\Bigg{)}\Bigg{\}}-Q^{t, \text{de}} d^{\text{Avg}}-\sum_{i\in \mathcal{U}}\sum_{j\in \mathcal{N}}\beta_{i}.
\end{aligned}
\end{equation}
Due to the maximizing process, this problem can be divided into $N$ subproblems based on the client with the maximum delay.
First, let us discuss the condition that the delay of the client owning the $j$-th channel is the maximum one among all clients.
We assume the $j$-th channel is allocated to the $i$-th client and its delay is the maximum one.
Thus, we can obtain
\begin{equation}
\begin{aligned}
d_{i, j}^{t}&=\frac{ Zp_{i}^{t}s_{i}^{t}}{B\log_{2}\left(1+\frac{P_{i}^{t} h_{i,j}^{t}}{\sigma^{2}}\right)}+d_i^{t, \text{do}}+\frac{\tau \vert \mathcal D_{i}\vert \Phi_{i}}{f_{i}^{t}}\\
&\geq\max_{i'\in\mathcal{U}/i}\Bigg{\{}\sum_{j'\in \mathcal{N}}\boldsymbol{a}_{i',j'}^{t}\Bigg{(}\frac{ Zp_{i'}^{t}s_{i'}^{t}}{B\log_{2}\Big{(}1+\frac{P_{i'}^{t} h_{i',j'}^{t}}{\sigma^{2}}\Big{)}}+d_{i'}^{t, \text{do}}+\frac{\tau \vert \mathcal D_{i'}\vert \Phi_{i'}}{f_{i'}^{t}}\Bigg{)}\Bigg{\}}.
\end{aligned}
\end{equation}
From above inequation, we have
\begin{equation}
\begin{aligned}
s_{i}^{t}\geq s_{i}^{t, \text{min}}&\triangleq B\log_{2}\left(1+\frac{P_{i}^{t} h_{i,j}^{t}}{\sigma^{2}}\right)\\
&\cdot\frac{\max\limits_{i'\in\mathcal{U}/i}\Bigg{\{}\sum_{j'\in \mathcal{N}}\boldsymbol{a}_{i',j'}^{t}\Bigg{(}\frac{ Zp_{i'}^{t}s_{i'}^{\text{th}}}{B\log_{2}\Big{(}1+\frac{P_{i'}^{t} h_{i',j'}^{t}}{\sigma^{2}}\Big{)}}+d_{i'}^{t, \text{do}}+\frac{\tau \vert \mathcal D_{i'}\vert \Phi_{i'}}{f_{i'}^{t}}\Bigg{)}\Bigg{\}}-d_i^{t, \text{do}}-\frac{\tau \vert \mathcal D_{i}\vert \Phi_{i}}{f_{i}^{t}}}{Z p_{i}^{t}}.
\end{aligned}
\end{equation}
If $s_{i}^{t, \text{min}} > 1$, this subproblem does not have solutions.
Otherwise, we can derive the first derivative of $V^t$ with respect to $s_{i}^{t}$ of the $j$-th subproblem as follows:
\begin{equation}
\begin{aligned}
\frac{\partial V^t}{\partial s_{i}^{t}} =
-\lambda + \frac{Zp_{i}^{t}Q^{t, \text{de}}}{B\log_{2}\left(1+\frac{P_{i}^{t}  h_{i,j}^{t}}{\sigma^{2}}\right)}.
\end{aligned}
\end{equation}
If $\frac{Zp_{i}^{t}Q^{t, \text{de}}}{B\log_{2}\left(1+\frac{P_{i}^{t} h_{i,j}^{t}}{\sigma^{2}}\right)} \leq \lambda$, it can be found that as the value of $s_{i}^{t}$ increases, the objective $V^t$ decreases.
Hence, we have $s_i^{t,*} = 1 $.
For other fast clients, i.e., $i'\in\mathcal{U}/i$, we have
\begin{equation}
\begin{aligned}
&s_{i'}^{t}\leq s_{i'}^{\text{max}}\triangleq\frac{Zp_{i'}^{t}\Bigg{(}\frac{ Zp_{i}^{t}s_{i}^{t}}{B\log_{2}\left(1+\frac{P_{i}^{t} h_{i,j}^{t}}{\sigma^{2}}\right)}+d_{i}^{t, \text{do}}+\frac{\tau \vert \mathcal D_{i}\vert \Phi_{i}}{f_{i}^{t}}-d_{i'}^{t, \text{do}}-\frac{\tau \vert \mathcal D_{i'}\vert \Phi_{i}}{f_{i'}^{t}}\Bigg{)}}{\sum_{j'=1}^{N}\boldsymbol{a}_{i',j'}^{t}B\log_{2}\Big{(}1+\frac{P_{i'}^{t} h_{i',j'}^{t}}{\sigma^{2}}\Big{)}}.
\end{aligned}
\end{equation}
We can also derive the first derivative of $V^t$ with respect to $s_{i'}^{t}$ as $\frac{\partial V^t}{\partial s_{i'}^{t}} = -p_{i'}^{t}\lambda$.
Therefore, we have $s_{i'}^{t,*} = \min \{s_{i'}^{\text{max}}, 1\}$.

If $\frac{Zp_{i}^{t}Q^{t, \text{de}}}{B\log_{2}\left(1+\frac{P_{i}^{t} h_{i,j}^{t}}{\sigma^{2}}\right)} > \lambda$, we can find that as the value of $s_{i}^{t}$ increases, the objective $V^t$ increases.
Therefore, the system need to select a small gradient sparsification rate $s_{i}^{t}$ in $[s^{\text{th}} , 1]$ for the $i$-th client.
However, for the $i'$-th client, i.e., $i'\in\mathcal{U}/i$, we want to select a large gradient sparsification rate in $[s^{\text{th}} , \min\{1, s_{i'}^{\text{max}}\}]$  because the first derivative of $V^t$ with respect to $s_{i'}^{t}$ is negative.
We can decrease the $s_{i}^{t}$ from $1$ and then $s_{i'}^{\text{max}}$ may be selected.
Therefore, the first derivative of $V^t$ with respect to $s_{i}^{t}$ should be modified because $s_{i'}^{\text{max}}$ is related to $s_{i}^{t}$.
When the first derivative of $V^t$ with respect to $s_{i}^{t}$ become negative, let us stop decreasing the value of $s_{i}^{t}$.
We can note that this way can obtain the optimal $s_{i}^{t,*}$ and $s_{i'}^{t,*} = \min \{s_{i'}^{\text{max}}, 1\}$.

Other subproblems can be addressed using the same method.
Overall, after addressing all $N$ subproblems, the optimal solution can be obtained as the final output.\\
This completes the proof. $\hfill\square$
\subsection{Feasibility Analysis}\label{appendix:feasibility_analysis}
We first introduce the Lyapunov function $\Gamma(\boldsymbol{Q}^t)=\frac{1}{2}(Q^{t, \text{de}})^{2}+\frac{1}{2}\sum_{i\in\mathcal{U}}(Q_{i}^{t, \text{fa}})^{2}$, in which the drift from one communication round can be given as
\begin{equation}
\begin{aligned}
&\Gamma\left(\boldsymbol{Q}^{t+1}\right)-\Gamma\left(\boldsymbol{Q}^{t}\right)=\frac{1}{2}\left(\max\left\{Q^{t, \text{de}}+d^{\text{Avg}}-d^{t}, 0\right\}\right)^{2}-\frac{1}{2}\left(Q^{t, \text{de}}\right)^{2}\\
&\quad\quad+\frac{1}{2}\sum_{i\in\mathcal{U}}\left(\max\left\{Q^{t-1, \text{fa}}_{i}+\mathbbm{1}_{i}^{t}-\beta_{i}, 0\right\}\right)^{2}-\frac{1}{2}\sum_{i\in\mathcal{U}}(Q_{i}^{t, \text{fa}})^{2}\\
&\quad\leq\frac{1}{2}\left(d^{\text{Avg}}-d^{t}\right)^{2}+Q^{t, \text{de}}\left(d^{\text{Avg}}-d^{t}\right)+\frac{1}{2}\sum_{i\in\mathcal{U}}\left(\mathbbm{1}_{i}^{t}-\beta_{i}\right)^{2}
+\sum_{i\in\mathcal{U}}Q_{i}^{t, \text{fa}}\left(\mathbbm{1}_{i}^{t}-\beta_{i}\right).
\end{aligned}
\end{equation}
Because
\begin{equation}
\begin{aligned}
&V^t(\boldsymbol{P}^{t}, \boldsymbol{s}^{t}, \boldsymbol{a}^{t})= -\lambda \sum_{i\in \mathcal{U}}p_{i}^{t}\mathbbm{1}_{i}^{t}s_{i}^{t}+\sum_{i\in \mathcal{U}}Q_{i}^{t, \text{fa}}(\mathbbm{1}_{i}^{t}-\beta_{i}) + Q^{t, \text{de}}\left(d^{t}-d^{\text{Avg}}\right),
\end{aligned}
\end{equation}
we have
\begin{equation}
\begin{aligned}
V^t(\boldsymbol{P}^{t}, \boldsymbol{s}^{t}, \boldsymbol{a}^{t})
&\leq
-\lambda \sum_{i\in \mathcal{U}}p_{i}^{t}\mathbbm{1}_{i}^{t}s_{i}^{t}+\frac{1}{2}\left(d^{\text{Avg}}-d^{t}\right)^{2}+Q^{t, \text{de}}\left(d^{\text{Avg}}-d^{t}\right)\\
&\quad+\frac{1}{2}\sum_{i\in\mathcal{U}}\left(\mathbbm{1}_{i}^{t}-\beta_{i}\right)^{2}+\sum_{i\in\mathcal{U}}Q_{i}^{t, \text{fa}}\left(\mathbbm{1}_{i}^{t}-\beta_{i}\right).
\end{aligned}
\end{equation}
Due to $s_{i}^{t}, p_{i}^{t}\leq 1$, $\boldsymbol{a}_{i,j}^{t}\in \{0,1\}$, $\beta_{i}\leq \frac{N}{U}$, $d^{t}=\max_{i\in\mathcal{U}} \mathbbm{1}_{i}^{t} d_{i, j}^{t}$ and $d_{i, j}^{t} = d_{i}^{t, \text{do}} + d_{i}^{t, \text{lo}} + d_{i, j}^{t, \text{up}}$, $\forall i \in \mathcal{U}, j\in \mathcal{N}$,
we have
\begin{equation}
\begin{aligned}
&\mathbb{E}\left[V^t(\boldsymbol{P}^{t}, \boldsymbol{s}^{t}, \boldsymbol{a}^{t})\vert \boldsymbol{Q}^{t}\right]\leq C_1 + Q^{t, \text{de}}\mathbb{E}\left[d^{\text{Avg}}-d^{t}\vert \boldsymbol{Q}^{t}\right] + \sum_{i\in\mathcal{U}}Q_{i}^{t, \text{fa}}\mathbb{E}\left[\mathbbm{1}_{i}^{t}-\beta_{i}\vert \boldsymbol{Q}^{t}\right].
\end{aligned}
\end{equation}
Based on Theorem 4.5 in~\cite{Neely2010Stochastic} and Lemma 1 in~\cite{Deng2020Wireless}, existing $\zeta^{\text{opt}} > 0$, we can obtain the following inequality:
\begin{equation}
\begin{aligned}
&\mathbb{E}\left[V^t(\boldsymbol{P}^{t}, \boldsymbol{s}^{t}, \boldsymbol{a}^{t})\right]\leq C_1 + \zeta^{\text{opt}}.
\end{aligned}
\end{equation}
By summing this equation over $t = 0, 1, \ldots, T$, we obtain
\begin{equation}
\begin{aligned}
\lim_{T\rightarrow\infty}\sup\frac{1}{T}\sum_{t=0}^{T-1}\mathbb{E}\left[V^t(\boldsymbol{P}^{t}, \boldsymbol{s}^{t}, \boldsymbol{a}^{t})\right]\leq C_1 + \zeta^{\text{opt}}<\infty.
\end{aligned}
\end{equation}
This completes the proof. $\hfill\square$

\bibliography{reference}
\bibliographystyle{IEEEtran}

\end{document}